\newtheorem{definition}{Definição}
\newtheorem{theorem}{Teorema}
\newtheorem{corollary}{Corolário}
\newtheorem{ppdd}{Propriedade}
\newtheorem{proposition}{Proposição}[section]
\begin{document}

\title{\Large Revisitando Entropias: Propriedades Formais e Conexões Entre as Entropias de Boltzmann-Gibbs, Tsallis e Rényi\\[1ex] \small Revisiting Entropies: Formal Properties and Connections Between Boltzmann-Gibbs, Tsallis and Rényi Entropies}

\author{Kelvin dos Santos Alves \href{https://orcid.org/0000-0002-3822-9818}{\includegraphics[scale=0.04]{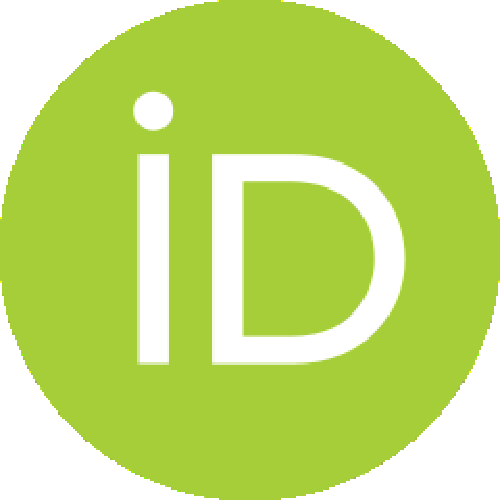}} \thanks{Endereço de correspondência: kelvin.santos@unesp.br}$^{1}$  Rogério Teixeira Cavalcanti\href{https://orcid.org/0000-0001-7848-5472}{\includegraphics[scale=0.04]{orcidicon.eps}} \thanks{Endereço de correspondência: rogerio.cavalcanti@ime.uerj.br}$^{1,2}$ }
\affiliation{$^1$Departamento de Física, Universidade Estadual Paulista, UNESP, Guaratinguetá, SP, Brasil. \\ $^2$Instituto de Matemática e Estatística, Universidade do Estado do Rio de Janeiro, UERJ, Rio de Janeiro, RJ, Brasil.}

\begin{abstract}
\textbf{Resumo}: O objetivo do presente artigo é apresentar uma discussão cuidadosa e acessível sobre aspectos formais das entropias de Boltzmann-Gibbs e de Tsallis. Iniciamos com uma breve exposição da entropia de Boltzmann-Gibbs, abordando suas principais propriedades e os teoremas de unicidade formulados por Shannon e Khinchin. Estabelecidos esses fundamentos, introduzimos a chamada mecânica estatística não aditiva, definindo a entropia de Tsallis, suas propriedades, seu teorema de unicidade e os contrastes com os resultados da mecânica estatística aditiva. Evidenciamos ainda que, em um limite apropriado, os resultados de Boltzmann-Gibbs são recuperados. Concluímos com uma breve discussão sobre a entropia de Rényi e suas conexões com as definições anteriores.   

\vspace{0.5cm}

\textbf{Abstract}: The aim of the present paper is to present a careful and accessible discussion of the formal aspects of Boltzmann-Gibbs and Tsallis entropies. We begin with a brief overview of Boltzmann-Gibbs entropy, highlighting its main properties and the uniqueness theorems formulated by Shannon and Khinchin. Once these foundational results are established, we introduce the framework of nonadditive statistical mechanics, defining Tsallis entropy, discussing its properties and uniqueness theorem, and contrasting it with the results from additive statistical mechanics. We also show that, in an appropriate limit, the Boltzmann-Gibbs results are recovered. The article concludes with a brief discussion of Rényi entropy and its connections to the previously defined entropic forms.
\end{abstract}

\keywords{Entropia; Mecânica Estatística; Não-Extensividade; Entropia de Tsallis}

\maketitle

\section{Introdução}
\label{intro}

A mecânica estatística usual, fundamentada na entropia de Boltzmann-Gibbs, constitui --- ao lado da Mecânica Clássica e Quântica, do Eletromagnetismo e da Relatividade --- um dos pilares centrais da física contemporânea. Além de ainda ser objeto de intensa pesquisa \cite{cimini2019statistical,lin2024measuring,de2023semiclassical}. Já o conceito de entropia, no contexto da termodinâmica \cite{ClausisusOrig}, apesar de ter sido inicialmente introduzido no século XIX por Clausius, passou a ocupar um papel central na formulação da física estatística apenas com os trabalhos de Boltzmann \cite{boltzmannweitere,boltzmann1877beziehung} e Gibbs \cite{gibbs1902elementary}.

A mecânica estatística de Boltzmann-Gibbs baseia-se na propriedade de aditividade da entropia, tal como definida por esses autores, para sistemas probabilisticamente independentes. Nessa formulação, a entropia é assumida como um postulado fundamental da teoria, uma vez que não pode ser derivada a partir de primeiros princípios, nem resulta de uma dinâmica microscópica específica \cite{tsallis2003introduction}. A formalização matemática desse conceito ganhou maior rigor com os trabalhos de Shannon \cite{book:320619}, em 1948, e Khinchin \cite{book:111623}, em 1953, nos quais foi proposto um conjunto de axiomas para a forma funcional da entropia. A definição de Boltzmann-Gibbs é a única que os satisfaz integralmente.

Uma proposta alternativa de entropia foi feita em 1988 por Tsallis \cite{Tsallis:1987eu}. \textcolor{black}{Essa proposta foi desenvolvida com o objetivo de abranger sistemas físicos para os quais a mecânica estatística de Boltzmann-Gibbs apresentava dificuldades matemáticas ou simplesmente falhava, como é o caso dos chamados sistemas complexos \cite{tsallis2002nonextensive}.} A entropia de Tsallis, usualmente denotada por $S_{q}$, é parte fundamental da denominada mecânica estatística não aditiva\footnote{Popularizada como mecânica estatística não extensiva \cite{tsallis2004nonextensive}.}. O índice $q$ representa um parâmetro associado a características particulares do sistema físico. Essa definição é uma generalização da definição de entropia da mecânica estatística aditiva, que possui a entropia de Boltzmann-Gibbs como um caso particular e parece descrever certos sistemas físicos com bastante precisão \cite{vignat2012quantum,nunes2002nonextensive,tsallis_2021}. Notadamente, uma classe de sistemas com correlações não locais. O parâmetro $q$ está relacionado à não localidade desses sistemas e pode, a princípio, ser determinado a partir da dinâmica microscópica do sistema \cite{tsallis2003introduction}. A propriedade escolhida para ser generalizada é a aditividade, por vezes denominada de pseudo-aditividade.

Outras generalizações do funcional entropia foram propostas, como a entropia de Rényi \cite{renyi1970probability}, denotada por $S_{q}^{R}$. No entanto, o fato de não ser experimentalmente robusta e nem côncava para $q> 1$ faz com que a única proposta que pareça adequada para as abordagens termodinâmicas seja $S_{q}$ \cite{boon2005special}.

O presente trabalho pretende explorar as principais propriedades matemáticas que caracterizam a função entropia, que se apresenta como um postulado fundamental tanto na mecânica estatística de Boltzmann-Gibbs quanto na de Tsallis. Para isso, alguns fundamentos sobre os quais a mecânica estatística é construída são revisitados. Essa discussão não é encontrada facilmente em referências canônicas \cite{book:18204,book:1322644}. Ressaltamos que o objetivo aqui não é discutir o conceito de entropia especificamente no contexto da termodinâmica, da mecânica estatística ou da teoria da informação, mas sim investigar suas propriedades gerais enquanto conceito matemático fundamental, independentemente da área de aplicação. Contudo, por simplicidade e clareza, os exemplos ilustrativos serão apresentados no contexto da teoria da informação. Para tal, primeiro abordaremos a entropia de Boltzmann-Gibbs e as principais propriedades desse funcional entrópico. Após isso, focaremos na entropia de Tsallis, suas principais propriedades, diferenças e semelhanças com respeito à mecânica estatística aditiva, além de uma breve introdução sobre a entropia de Rényi. Ao leitor interessado em uma discussão introdutória sobre aplicações físicas da mecânica estatística não aditiva, indicamos a referência \cite{tsallis_2021}. \textcolor{black}{Na Seção \ref{aplicacoes} damos um breve panorama de tais aplicações, bem como um conjunto mais abrangente de referências sobre o tópico.}

\section{Mecânica estatística de Boltzmann-Gibbs}\label{def_entropia_de_bolt_gibbs_shannon}

Com o surgimento e avanço da Revolução Industrial no século XIX, tornou-se essencial otimizar os processos industriais e aumentar a eficiência das máquinas térmicas a vapor que estavam sendo desenvolvidas. Tais máquinas, como os trens a vapor, operam entre dois reservatórios térmicos: extraem parte do calor de uma fonte quente para realizar trabalho e dissipam o restante em uma fonte fria. Nesse contexto, nasce a Termodinâmica e inserem-se os trabalhos do engenheiro Sadi Carnot, cujo objetivo era compreender como maximizar o rendimento das máquinas térmicas, tornando-as o mais eficientes possível. A gênese do conceito de entropia tem origem nesse período \cite{book:3209553}, em um momento em que o interesse por tais desenvolvimentos transcendeu a esfera científica, envolvendo também aspectos político-econômicos.

Posteriormente aos trabalhos de Carnot, Clausius \cite{CarnotOrig}\cite{ClausisusOrig} e William Thomson \cite{thomson1853xv} elaboraram, distintamente, mas de forma equivalente, a segunda lei da termodinâmica. Os estudos de Clausius que culminaram na segunda lei da termodinâmica foram focados em analisar a reversibilidade e irreversibilidade de máquinas térmicas atuando entre dois reservatórios térmicos. Foi notado por ele que a variação total de uma determinada grandeza, denominada entropia, é nula para processos reversíveis e maior que zero se o sistema for termicamente isolado \cite{book:3209553}. Assim posto, a entropia surge inicialmente como uma medida da irreversibilidade de um sistema termodinâmico \cite{nascimento_prudente_2016}.

Entre o final do século XIX e o início do século XX, Ludwig Boltzmann introduziu uma interpretação estatística da termodinâmica, partindo da ideia de que o calor é resultado do movimento das moléculas \cite{BoltzmannTrab}, com base na teoria cinética dos gases de Maxwell. Para ilustrar sua abordagem, considere o exemplo de duas moedas idênticas: cada uma possui duas possibilidades em um lançamento, cara ou coroa, resultando em quatro possíveis configurações ao se lançar ambas — os microestados. Cada microestado representa uma configuração acessível do sistema. A formulação de Boltzmann leva ao teorema fundamental segundo o qual, quanto maior o número de microestados associados a um estado macroscópico, maior a probabilidade de ocorrência desse estado \cite{reis_bassi_2012}. Isso conduz à famosa expressão da entropia de Boltzmann:
	\begin{align}\label{entropia_BG_estados_igual}
		S_{B}(W) = k_{B}\ln W,
	\end{align}
onde $W$ é o número de microestados acessíveis ao sistema e $k_{B}$ é a constante de Boltzmann. Nesse contexto, a entropia passa a ser interpretada como uma medida da quantidade de configurações possíveis de um sistema físico — ou seja, assume um caráter eminentemente probabilístico.

Mais adiante, Gibbs introduz o conceito de \textit{ensemble} para representar um conjunto de microestados acessíveis a um determinado sistema. Por meio desse formalismo, foi introduzida a hoje denominada entropia de Boltzmann-Gibbs
	\begin{align}\label{entropia_BG_estados_discretos}
		S_{BG}(\{p_{i}\}) = -k \sum^{W}_{i=1} p_{i}\ln p_{i},
	\end{align}	
	\begin{align}\label{condicao_de_normalizacao}
		\sum_{i=1}^{W} p_{i} = 1.
	\end{align}
Sendo $W$, novamente, o número de estados microscópicos discretos e $p_{i}$ a probabilidade do sistema acessar um determinado microestado $i$.{ A probabilidade obedece à condição de normalização fornecida pela equação \eqref{condicao_de_normalizacao}. A constante $k$ é interpretada como sendo a constante de Boltzmann-Gibbs do ponto de vista termodinâmico e, normalmente, $k=1$ é utilizado na teoria da informação \cite{tsallis_2021} \footnote{Mesmo em termodinâmica podemos adotar um sistema de unidades tal que a constante $k$ seja unitária, sem perda de generalidade \cite{tsallis2003introduction}.}.} { Pode-se notar que para um conjunto de estados equiprováveis, ou seja, um conjunto no qual todos os microestados são acessíveis com mesma probabilidade, essa probabilidade é dada por $p_{i} = 1/W$. Assim, o funcional dado na equação \eqref{entropia_BG_estados_discretos} torna-se a entropia de Boltzmann, dada pela equação \eqref{entropia_BG_estados_igual}}.

{Já durante o século XX, o engenheiro e matemático Claude Shannon aplicou a teoria de probabilidades ao estudo da comunicação e da informação em seu artigo seminal \textit{The Mathematical Theory of Communication} \cite{book:320619}. Apesar do contexto aparentemente desconectado da termodinâmica e da física estatística, Shannon obtém uma expressão para a medida da incerteza da taxa de informação produzida que é formalmente idêntica à entropia de Boltzmann-Gibbs \cite{pena2021comunicaccao}. {Em um diálogo com o matemático John von Neumann, Shannon relata que estava indeciso sobre como deveria chamar a expressão encontrada por ele e sugere denominá-la de \textit{informação}. Mas justifica que a palavra já é bastante usada e, então, propõe chamá-la de \textit{incerteza}. Em resposta, Neumann sugere chamá-la de entropia, uma vez que a função incerteza encontrada por Shannon já estava sendo usada em mecânica estatística \cite{tribus1971energy}. De fato, a entropia de Boltzmann-Gibbs revela-se uma expressão fundamental que emerge em contextos diversos, o que ainda suscita debates sobre seu real significado\footnote{Uma discussão sobre a interpretação do conceito de entropia em teoria da informação e em mecânica estatística pode ser encontrado em \cite{camargo02}, bem como o diálogo — por vezes conflituoso — entre essas abordagens.} \cite{maziero2015entendendo}.}
	
Para ilustrar o conceito de entropia sob a ótica da teoria da informação, considere o exemplo no qual você e um amigo estejam lançando um dado honesto. Seu amigo, de costas, tenta adivinhar qual face caiu voltada para cima. Como todas as faces são igualmente prováveis, a chance de acerto é $1/6$. Para ele, a incerteza máxima está presente, e a entropia associada ao sistema é $S_{BG} = k_{B} \ln 6$. Por outro lado, você observa diretamente o resultado e tem certeza do estado do sistema. Nesse caso, a entropia é nula. Esse exemplo ilustra uma característica fundamental da entropia: sua dependência da informação disponível ao observador. Assim, a entropia pode ser interpretada como uma medida da ignorância sobre a configuração de um sistema físico.

	\subsection{PROPRIEDADES DA ENTROPIA DE BOLTZMANN-GIBBS}
	
	Dada uma contextualização histórica e conceitual sobre a entropia de Boltzmann-Gibbs, \textcolor{black}{serão} introduzidas e demonstradas algumas propriedades matemáticas que a fundamentam. Dentre os principais resultados, destaca-se a aditividade, violada pela entropia de Tsallis, como veremos na seção seguinte. Os teoremas de unicidade serão discutidos em detalhes.

	\begin{definition}[Aditividade]
		Seja $\mathcal{F}$ uma grandeza física associada aos sistemas físicos $A$ e $B$, que assumimos serem probabilisticamente independentes. A grandeza $\mathcal{F}(A)$ é associada ao sistema $A$ e $\mathcal{F}(B)$ associada ao sistema $B$. Diz-se que a grandeza $\mathcal{F}$ é \textit{aditiva} se, e somente se,   
		\begin{align*}
			\mathcal{F}(A+B) = \mathcal{F}(A) + \mathcal{F}(B).
		\end{align*}
	\end{definition}
	\begin{ppdd}[Aditividade]
		Considere os subsistemas $A$ e $B$ independentes, com $p_{i}^{(A)}$ e $p_{j}^{(B)}$ representando as probabilidades de cada subsistema acessar o $i$-ésimo e o $j$-ésimo microestado, respectivamente. A probabilidade resultante da composição dos subsistemas é dada por $p_{ij}^{(A+B)} = p_{i}^{(A)} p_{j}^{(B)}$ para todo $(i,j)$, de modo que a entropia de Boltzmann-Gibbs do sistema composto $A+B$ é dada por
		\begin{align}
			&S_{BG}(A+B)=S_{BG}(p_{ij}^{(A+B)}) = \sum_{i,j} p_{ij}^{(A+B)} \ln p_{ij}^{(A+B)} \nonumber\\
			&= \sum_{i,j} p_{i}^{(A)} p_{j}^{(B)} \ln \left(p_{i}^{(A)} p_{j}^{(B)}\right) \nonumber \\
			&= \sum_{i} \sum_{j} p_{i}^{(A)} p_{j}^{(B)} \ln \left(p_{i}^{(A)}\right) + \sum_{i} \sum_{j} p_{i}^{(A)} p_{j}^{(B)} \ln \left( p_{j}^{(B)}\right) \nonumber\\
		&= \sum_{i}p_{i}^{(A)} \ln \left(p_{i}^{(A)}\right) + \sum_{j} p_{j}^{(B)} \ln \left( p_{j}^{(B)}\right) \nonumber \\
			&= S_{BG}(p_{i}^{(A)}) + S_{BG}(p_{j}^{(B)}) = S_{BG}(A) + S_{BG}(B),
		\end{align} 
		e portanto é aditiva. 	
	\end{ppdd}

	\begin{ppdd}[Positividade]
		Se um determinado estado $n_{0}$ do sistema é totalmente conhecido, então tem-se que $p_{n_{0}} = 1$ e $p_{n} = 0$ para todo $n \neq n_{0}$, e, portanto, 
		\begin{align*}
			S_{BG}=k \ln (1)=0.
		\end{align*}
		No entanto, para qualquer outro caso, tem-se que $0<p_{n}<1$, para pelo menos dois valores de $n$ e, consequentemente, $1/p_{n}>1$, resultando, portanto, em
		\begin{align}
			S_{BG} = -k\sum_{n}^{W}p_{n}\ln p_{n} = k\sum_{n}^{W}p_{n}\ln \frac{1}{p_{n}} >0, 
		\end{align} 
		De modo que a entropia de Boltzmann-Gibbs é sempre positiva.
	\end{ppdd}

	\begin{ppdd}[Expansibilidade]\label{ppdd_expansibilidade}
		Se estados possíveis forem adicionados à entropia $S_{BG}$, de modo que a probabilidade, $p_{W+1},\dots,p_{W+N}$, de que eles aconteçam seja nula, então a entropia permanecerá invariante. De fato, temos 
		\begin{align}
			&S_{BG}(p_{1},p_{2},\dots,p_{W},p_{W+1},\dots,p_{W+N}) \\&= -k \sum_{i}^{W+N} p_{i} \ln p_{i} \nonumber\\&= -k\left(\sum_{i}^{W}p_{i}\ln p_{i} + \underbrace{\lim_{p_{i}\to 0}\sum_{W+1}^{N}p_{i}\ln p_{i}}_{\rightarrow 0}\right) \nonumber\\ &= -k\sum_{i}^{W}p_{i}\ln p_{i} = S_{GB}(p_{1},p_{2},\dots,p_{W}).
		\end{align}
		Portanto, temos:
		\begin{align}
			S_{BG}(p_{1},\dots,p_{W},p_{W+1},\dots,p_{W+N})=S_{BG}(p_{1},\dots,p_{W}).
		\end{align}
	\end{ppdd}
O leitor pode, a esta altura, questionar-se sobre a interpretação física da propriedade (\ref{ppdd_expansibilidade}). Para esclarecê-la, consideremos um sistema físico cujos estados possíveis são $\varepsilon_1, \varepsilon_2, \varepsilon_3$ e $\varepsilon_4$. Suponhamos, entretanto, que a energia disponível ao sistema permite o acesso apenas aos estados $\varepsilon_1$ e $\varepsilon_2$. Nesse caso, as probabilidades associadas a esses dois estados são $p_1$ e $p_2$, respectivamente, enquanto os estados $\varepsilon_3$ e $\varepsilon_4$ permanecem inacessíveis, com probabilidades $p_3 = 0$ e $p_4 = 0$. A propriedade de expansibilidade da entropia assegura que esses estados de probabilidade nula podem ser incluídos na soma sem alterar o valor da entropia do sistema.
	
Antes de prosseguirmos para o próximo resultado, será útil revisitar alguns conceitos básicos de análise matemática \cite{lima2004analise}. Seja $f(x): I\subset\mathbb{R}\mapsto\mathbb{R}$ uma função real. Tome os pontos $A=(x_{0},f(x_{0}))$ e $B=(x_{1},f(x_{1}))$ pertencentes ao gráfico $f$, com $x_{0},x_{1}\subset I$. A reta
	\begin{align}
		y(x) = f(x_{0}) + \frac{f(x_{1})-f(x_{0})}{x_{1}-x_{0}}(x-x_{0}), 
	\end{align}
	que liga os pontos $A$ e $B$, é chamada de secante $x_{0}x_{1}$. 
	\begin{figure}[h!]
		\centering
		\includegraphics[width=0.8\linewidth]{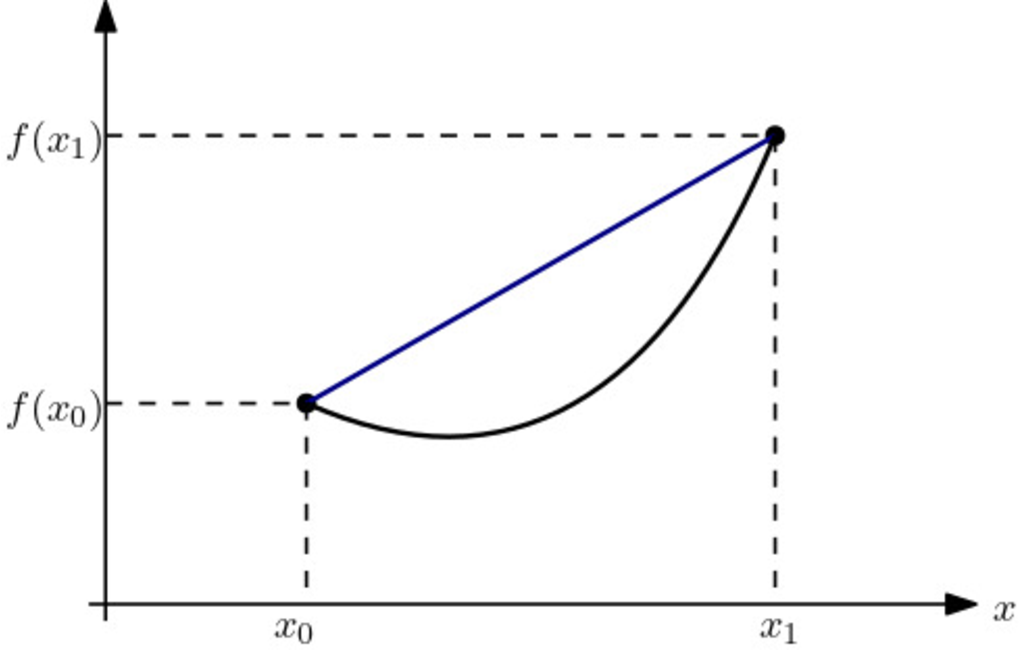}
		\caption{Em azul a reta secante $x_{0}x_{1}$ que liga os pontos $(x_{0},f(x_{0}))$ e $(x_{1},f(x_{1}))$. Em preta a linha que determina o gráfico $f(x)$.}
		\label{fig:funcaoconvexa}
	\end{figure}
	\begin{definition}\label{def_concavidade}
		Uma função $f:I\subset\mathbb{R}\mapsto\mathbb{R}$ é dita ser convexa (côncava) quando seu gráfico está abaixo (acima) de qualquer uma de suas secantes (ver Figura \ref{fig:funcaoconvexa}). Seja $x_{0}<x<x_{1}$ em $I$. Podemos expressar a convexidade (concavidade) de $f$, por meio de \cite{lima2004analise}
		\begin{align}
			f(x)\leq(\geq) f(x_{0}) + \frac{f(x_{1})-f(x_{0})}{x_{1}-x_{0}}(x-x_{0}). 
		\end{align}
	\end{definition}
	
	O ponto $x\in I$ pertencente ao intervalo $[x_{0},x_{1}]$ pode ser escrito de forma única como $x = (1-\lambda)x_{0} + \lambda x_{1}$, com $0 \leq \lambda \leq 1$. Portanto, podemos escrever
	\begin{align}
		\lambda = \frac{(x-x_{0})}{(x_{1}-x_{0})}.
	\end{align}
	Dessa maneira, por meio da definição (\ref{def_concavidade}), podemos afirmar que uma função $f$ é convexa (côncava) se, e somente se,
	\begin{align}
		f(x)\leq(\geq) \lambda f(x_{1}) + (1-\lambda)f(x_{0}). 
	\end{align}
	
Um resultado decorrente das definições acima, que relaciona a derivada de uma função com sua convexidade (concavidade) e que será de suma importância para a propriedade (\ref{BG_fouth_ppdd}), é dado por \cite{lima2004analise}:
	
	\begin{corollary}\label{corolario_segunda_derivada}
		Uma função $f:I\subset\mathbb{R}\mapsto\mathbb{R}$, diferenciável duas vezes no intervalo $I$, é convexa (côncava) se, e somente se, $f''(x)\geq (\leq)  0$ para todo $x \in I$.
	\end{corollary}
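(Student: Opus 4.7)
Planejo demonstrar a equivalência por partes, tratando apenas o caso convexo; o caso côncavo segue imediatamente aplicando o resultado à função $-f$, pois $(-f)''=-f''$ e $-f$ é convexa se, e somente se, $f$ é côncava.

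Para o sentido direto ($f$ convexa implica $f''\geq 0$), a estratégia é estabelecer primeiro que $f'$ é não decrescente em $I$, de onde, pela diferenciabilidade de segunda ordem, segue $f''\geq 0$. Para isso usarei a chamada desigualdade dos três declives: para quaisquer $x_{0}<x<x_{1}$ em $I$, a definição de convexidade dada em (\ref{def_concavidade}) equivale, após rearranjo algébrico, a
\begin{align*}
\frac{f(x)-f(x_{0})}{x-x_{0}}\leq \frac{f(x_{1})-f(x_{0})}{x_{1}-x_{0}}\leq \frac{f(x_{1})-f(x)}{x_{1}-x}.
\end{align*}
Fixando $x_{0}<x_{1}$ e passando ao limite $x\to x_{0}^{+}$ no primeiro quociente e $x\to x_{1}^{-}$ no último, a diferenciabilidade de $f$ fornece $f'(x_{0})\leq f'(x_{1})$, isto é, $f'$ é não decrescente em $I$ e, consequentemente, $f''(x)\geq 0$ em todo $I$.

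Para o sentido recíproco ($f''\geq 0$ implica $f$ convexa), quero mostrar que, para $x_{0}<x<x_{1}$ em $I$, vale
\begin{align*}
\frac{f(x)-f(x_{0})}{x-x_{0}}\leq \frac{f(x_{1})-f(x)}{x_{1}-x},
\end{align*}
desigualdade algebricamente equivalente à forma de secante apresentada em (\ref{def_concavidade}). Pelo Teorema do Valor Médio, existem $c_{1}\in(x_{0},x)$ e $c_{2}\in(x,x_{1})$ tais que os dois quocientes acima coincidem com $f'(c_{1})$ e $f'(c_{2})$, respectivamente. Como $f''\geq 0$, $f'$ é não decrescente, e de $c_{1}<c_{2}$ decorre $f'(c_{1})\leq f'(c_{2})$, o que encerra a demonstração.

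O ponto mais delicado é a equivalência entre a definição geométrica, via posição do gráfico em relação à secante, e a desigualdade dos três declives empregada no sentido direto. Embora se trate de uma reorganização algébrica elementar, é ela que converte o enunciado puramente geométrico da convexidade em uma informação monotônica sobre $f'$, viabilizando o acesso à segunda derivada; sem essa ponte, a passagem da Definição \ref{def_concavidade} para afirmações sobre $f''$ ficaria obstruída.
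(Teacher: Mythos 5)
Sua demonstração está correta: a redução do caso côncavo ao convexo via $-f$, a desigualdade dos três declives para obter a monotonicidade de $f'$ no sentido direto e o Teorema do Valor Médio no sentido recíproco constituem exatamente o argumento clássico. O artigo, de fato, não demonstra esse corolário — apenas o enuncia citando a referência de análise —, e o seu argumento é o padrão encontrado nessa literatura, de modo que não há divergência de abordagem a apontar.
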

	
	Com isso, podemos discutir a próxima propriedade associada à entropia de Boltzmann-Gibbs.

	\begin{ppdd}[Concavidade]\label{BG_fouth_ppdd}
		Considere dois conjuntos de probabilidades $\{p_{i}\}$ e $\{\tilde{p}_{i}\}$ associados ao mesmo sistema, com $W$ microestados. Pode-se definir um novo conjunto de probabilidades intermediárias $\{p'_{i}\}$, dado da seguinte maneira
		\begin{align}\label{prob_intermediaria}
			p_{i}' = \lambda p_{i} + (1-\lambda)\tilde{p}_{i} \qquad (\forall i; 0<\lambda<1).
		\end{align} 
		A entropia $S_{BG}$ é dita ser côncava se, e somente se,
		\begin{align}\label{def_concavidade_bg}
			S_{BG}(\{p_{i}'\}) > \lambda S_{BG}(\{p_{i}\}) + (1-\lambda)S_{BG}(\{\tilde{p}_{i}\}).
		\end{align}
		Como a função $-x \ln x$ possui derivada segunda negativa, ela satisfaz a equação \eqref{def_concavidade_bg}, e, portanto, tem-se que
		\begin{align}\label{eq_caqui}
			-p_{i}' \ln p_{i}' > \lambda (-p_{i} \ln p_{i}) + (1-\lambda)(-\tilde{p}_{i}\ln \tilde{p}_{i}) \nonumber\\ (\forall i; 0<\lambda<1).
		\end{align}
		Realizando a soma sobre todos os estados $\sum_{i=1}^{W}$ na equação (\ref{eq_caqui}), a equação \eqref{def_concavidade_bg} é imediatamente recuperada, concluindo-se que a entropia de Boltzmann-Gibbs é côncava.
	\end{ppdd}

Após a apresentação e demonstração de algumas propriedades fundamentais da entropia de Boltzmann-Gibbs, passamos agora à análise de resultados de unicidade que essa entropia deve satisfazer. Nos trabalhos de Shannon \cite{book:320619}, em 1948, e de Khinchin \cite{book:111623}, em 1953, foi proposto um conjunto de axiomas que estabelecem critérios para a forma funcional admissível da entropia. Sob hipóteses matematicamente razoáveis, demonstra-se que a única função que satisfaz integralmente esses axiomas é justamente a entropia de Boltzmann-Gibbs.

\subsubsection*{TEOREMA DE SHANNON}

{Shannon propôs o famoso teorema que carrega seu nome no contexto da teoria da informação, sem nenhuma conexão direta com a termodinâmica ou com a mecânica estatística de Boltzmann-Gibbs. Ele tinha em mente definir uma quantidade que determinasse a taxa de informação produzida por um sistema. Essa proposta dialoga com a noção de entropia discutida na seção \ref{def_entropia_de_bolt_gibbs_shannon}, que a interpreta como uma medida da ignorância de um determinado observador sobre o sistema. Suponha que $p_{1},p_{2},\dots,p_{W}$ sejam as probabilidades associadas aos eventos possíveis de um experimento aleatório.} {Se existir uma medida $S(p_{1},p_{2},\dots,p_{W})$, que informe o quanto de escolha está envolvida na seleção dos eventos ou a quantidade de incerteza sobre o resultado desse experimento, é razoável requerer certas propriedades sobre essa quantidade \cite{book:320619}:
		\begin{enumerate}
			\item \label{primeira_cond_shannon} $S(p_{1},p_{2},\dots,p_{W})$ é uma função contínua de $\{p_{i}\}$;
			\item \label{segunda_cond_shannon} Para o conjunto $\{p_{i}\}$ de probabilidades equiprováveis, com $p_{i} = \frac{1}{W}$, a função $S(p_{i}=1/W)$, $\forall i$, deve crescer monotonamente e ser contínua com respeito a $W$;
			\item \label{terceira_cond_shannon} Se uma escolha for dividida em duas escolhas sucessivas, a função $S({p_{i}})$ original deve ser a soma ponderada dos valores individuais de $S$.
	\end{enumerate}} 
	
Para compreender de forma mais intuitiva a condição~(\ref{terceira_cond_shannon}), considere um experimento com três possíveis resultados, cada um com uma probabilidade distinta de ocorrência: $p_{1} = \frac{1}{2}$, $p_{2} = \frac{1}{3}$ e $p_{3} = \frac{1}{6}$, conforme ilustrado na Figura~\ref{fig:diagramaexperimentos}. Nesse cenário, a entropia associada ao sistema é dada por $S\left(\frac{1}{2}, \frac{1}{3}, \frac{1}{6}\right)$.
	\begin{figure}[h!]
		\centering
		\includegraphics[width=0.9\linewidth]{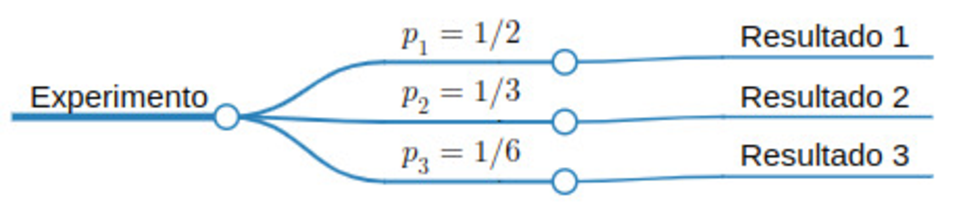}
		\caption{Diagrama ilustrando o experimento com as três possíveis resultados e as respectivas probabilidades associadas: $p_{1}$, $p_{2}$ e $p_{3}$.}
		\label{fig:diagramaexperimentos}
	\end{figure}
Considere agora um segundo experimento, no qual a primeira etapa envolve duas opções igualmente prováveis, cada uma com probabilidade $p'_{1} = p'_{2} = \frac{1}{2}$. Caso a primeira opção seja escolhida $\left(p'_{1} = \frac{1}{2}\right)$, o experimento leva diretamente ao resultado 1 — o mesmo do primeiro experimento ilustrado na Figura~\ref{fig:diagramaexperimentos}. Já a segunda opção $\left(p'_{2}=\frac{1}{2}\right)$ conduz a uma segunda etapa, na qual dois novos resultados podem ocorrer, com probabilidades $p_{21} = \frac{2}{3}$ e $p_{22} = \frac{1}{3}$, correspondendo aos resultados 2 e 3, respectivamente. Esse segundo experimento pode ser representado como mostrado na Figura~\ref{fig:diagramaexperimentos2}.
	\begin{figure}[h!]
		\centering
		\includegraphics[width=1\linewidth]{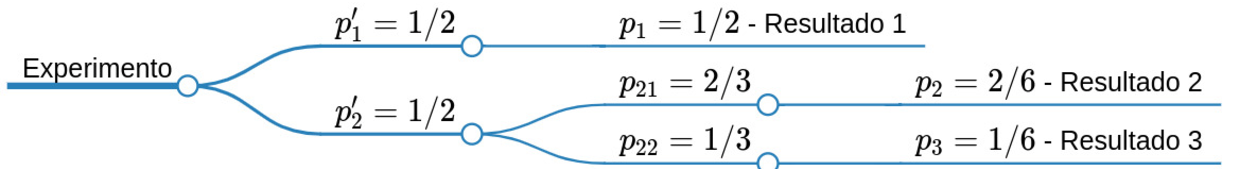}
		\caption{Diagrama ilustrando o experimento. Inicialmente, há duas possibilidades de escolhas. A escolha com probabilidade $p'_{1}=1/2$ leva ao resultado 1. Já a escolha $p'_{2}=1/2$ se subdivide em duas outras escolhas possíveis, com probabilidades $p_{21}=1/3$ e $p_{22}=2/3$, cada uma levando aos resultados 2 e 3, respectivamente.}
		\label{fig:diagramaexperimentos2}
	\end{figure}
	{A entropia associada ao sistema deste segundo experimento, à luz da condição (\ref{terceira_cond_shannon}), será dada por 
		\begin{align}
			S\left(\frac{1}{2},\frac{1}{3},\frac{1}{6}\right) = S\left(\frac{1}{2},\frac{1}{2}\right) + \frac{1}{2}S\left(\frac{2}{3},\frac{1}{3}\right).
	\end{align}}
		
		{\begin{theorem}[Shannon]
				A única função $S(\{p_{i}\})$ que satisfaz as propriedades (\ref{primeira_cond_shannon}), (\ref{segunda_cond_shannon}) e (\ref{terceira_cond_shannon}), listadas acima, é da forma 
				\begin{align}
					S(\{p_{i}\}) = -k\sum_{i=1}^{W} p_{i}\ln p_{i}. 
				\end{align}
		\end{theorem}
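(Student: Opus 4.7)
O plano é seguir a estratégia original de Shannon, reduzindo primeiro o problema ao caso equiprovável e, em seguida, estendendo o resultado a distribuições arbitrárias por meio de agrupamento e continuidade. Para tanto, introduz-se a função auxiliar
\begin{align*}
A(W) \equiv S(1/W,\,1/W,\,\ldots,\,1/W),
\end{align*}
que depende apenas do número $W$ de estados equiprováveis.

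O primeiro passo concreto é aplicar a condição (\ref{terceira_cond_shannon}) ao experimento composto por $m$ escolhas equiprováveis e independentes entre $W$ opções cada, experimento este que possui $W^{m}$ resultados equiprováveis no total. Decompondo-o em escolhas sucessivas, obtém-se a equação funcional $A(W^{m}) = m\,A(W)$. Combinando essa identidade com a monotonicidade imposta pela condição (\ref{segunda_cond_shannon}), aplica-se o argumento clássico de enquadramento: dados inteiros $s,t \geq 2$ e $n$ arbitrário, escolhe-se $m$ de modo que $s^{m} \leq t^{n} \leq s^{m+1}$; tomando logaritmos, dividindo por $n\ln s$ e usando a monotonicidade de $A$, conclui-se que $A(t)/A(s) = \ln t/\ln s$ a menos de termos que tendem a zero quando $n\to\infty$. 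Logo $A(W) = k\ln W$, com $k>0$ fixado pela monotonicidade.

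Em seguida, estende-se o resultado às probabilidades racionais. Dado $\{p_i\}_{i=1}^{W}$ com $p_i = n_i/N$, $n_i\in\mathbb{N}$ e $\sum_i n_i = N$, considera-se um experimento auxiliar com $N$ resultados equiprováveis agrupados em $W$ classes de tamanhos $n_i$. A condição (\ref{terceira_cond_shannon}), aplicada a essa partição, fornece
\begin{align*}
A(N) = S(\{p_i\}) + \sum_{i=1}^{W} p_i\,A(n_i).
\end{align*}
Substituindo $A(N) = k\ln N$ e $A(n_i) = k\ln n_i$, segue imediatamente que
\begin{align*}
S(\{p_i\}) = -k\sum_{i=1}^{W} p_i \ln(n_i/N) = -k\sum_{i=1}^{W} p_i \ln p_i.
\end{align*}
Por fim, como os racionais são densos em $[0,1]$ e a condição (\ref{primeira_cond_shannon}) garante a continuidade de $S$, a fórmula se estende a qualquer distribuição de probabilidades reais.

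Antecipo que o passo mais delicado será o argumento de enquadramento que leva à forma logarítmica de $A(W)$, pois é nele que se concilia a identidade multiplicativa $A(W^{m})=mA(W)$ com a estrutura aditiva do logaritmo utilizando apenas a monotonicidade, sem qualquer hipótese adicional de regularidade. Uma vez vencida essa etapa, o restante decorre de forma essencialmente mecânica do axioma de decomposição e da continuidade.
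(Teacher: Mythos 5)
Sua demonstração está correta, mas segue um caminho genuinamente distinto do adotado no artigo. Você reproduz a estratégia original de Shannon: define $A(W)=S(1/W,\dots,1/W)$, deduz $A(W^{m})=mA(W)$ da condição (\ref{terceira_cond_shannon}), obtém $A(W)=k\ln W$ pelo argumento de enquadramento $s^{m}\leq t^{n}\leq s^{m+1}$ usando apenas a monotonicidade da condição (\ref{segunda_cond_shannon}), estende a fórmula às probabilidades racionais $p_{i}=n_{i}/N$ via o axioma de agrupamento e conclui por densidade e continuidade. O artigo, seguindo Carter, parte da hipótese adicional de que $S(\{p_{i}\})=\sum_{i}f(p_{i})$ para alguma função $f$ de uma única variável — hipótese que não consta dos axiomas e não é justificada —, reduz o problema à equação funcional $h(MN)=h(M)+h(N)$ e a resolve \emph{diferenciando-a}, o que pressupõe implicitamente a diferenciabilidade de $h$, outra hipótese de regularidade ausente do enunciado. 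Sua rota é, portanto, mais fiel aos axiomas tal como enunciados e logicamente mais completa (em particular, o artigo nunca retorna de fato ao caso de probabilidades não equiprováveis, apenas ``restaura'' $p=1/m$ na fórmula final, ao passo que você trata esse passo explicitamente via agrupamento e continuidade); o que a abordagem do artigo oferece em troca é brevidade e acessibilidade didática. O único ponto que mereceria um pouco mais de cuidado no seu texto é a passagem de $A(W^{m})=mA(W)$ a partir da condição (\ref{terceira_cond_shannon}): convém explicitar a indução sobre $m$, escrevendo $A(W^{m})=A(W)+\sum_{j=1}^{W}\frac{1}{W}A(W^{m-1})=A(W)+A(W^{m-1})$, mas isso é um detalhe de redação, não uma lacuna.
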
}
		\begin{proof}
			A demonstração aqui exposta é baseada em \cite{carter2001classical}. 
			Considere que a função $S(p_{1},p_{2},\dots,p_{W})$ possa ser escrita como 
			\begin{align}\label{ceu}
				S(p_{1},p_{2},\dots,p_{W}) = \sum_{i=1}^{W} f(p_{i}), 
			\end{align} 
			onde $f(p_{i})$ é uma função contínua a ser determinada. Como ela é válida para todos os casos, inclusive para os equiprováveis, é suficiente determiná-la nesse caso, em que $p_{i}=1/W$, e 
			\begin{align}\label{maca}
				S\left(\frac{1}{W},\frac{1}{W},...,\frac{1}{W}\right) = W f\left(\frac{1}{W}\right).
			\end{align}
			Pela condição (\ref{segunda_cond_shannon}), tem-se que 
			\begin{align}\label{caqui}
				\frac{d}{dW}\left[W f(\frac{1}{W})\right] \geq 0.
			\end{align}
			Utilizando a condição (\ref{terceira_cond_shannon}) e considerando um experimento com $m$ possibilidades iguais, em que cada escolha possa ser decomposta em $n$ escolhas de possibilidades iguais, com $W=nm$, tem-se
			\begin{align}\label{pera}
				&S\left(\frac{1}{W},\frac{1}{W},...,\frac{1}{W}\right) = S\left(\frac{1}{m},\frac{1}{m},...,\frac{1}{m}\right) + S\left(\frac{1}{n},\frac{1}{n},...,\frac{1}{n}\right) \nonumber\\&= S\left(\frac{1}{nm},\frac{1}{nm},...,\frac{1}{nm}\right).
			\end{align}
			Escrevendo a equação (\ref{pera}) por meio da equação (\ref{maca}), tem-se
			\begin{align}
				m f\left(\frac{1}{m}\right) + nf\left(\frac{1}{n}\right) = mnf\left(\frac{1}{mn}\right).
			\end{align}
			Adotando $M = \frac{1}{m}$ e $N = \frac{1}{n}$, pode-se reescrever a equação \ref{pera} como
			\begin{align}
				\frac{1}{M}f(M) + \frac{1}{N}f(N) = \frac{1}{MN}f(MN).
			\end{align}
			Definindo $h(N) = \frac{1}{N}f(N)$, $h(M) = \frac{1}{M}f(M)$ e $h(MN) = \frac{1}{MN}f(MN)$, obtém-se 
			\begin{align}\label{uva}
				h(N) + h(M) = h(MN).
			\end{align}
			Diferenciando a equação (\ref{uva}) em relação a $M$ e também em relação a $N$, tem-se
			\begin{align}
				h'(M) =  \frac{\partial h(MN)}{\partial M}=\frac{\partial(MN)}{\partial M}\frac{\partial h(MN)}{\partial(MN)}=N\frac{\partial h(MN)}{\partial(MN)},
			\end{align} 
			\begin{align}
				h'(N) = \frac{\partial h(MN)}{\partial N}=\frac{\partial(MN)}{\partial N}\frac{\partial h(MN)}{\partial(MN)}=M\frac{\partial h(MN)}{\partial(MN)}.
			\end{align}
			Dessas relações obtém-se
			\begin{align}
				M h'(M) = N h'(N).
			\end{align}
			Como as variáveis $M$ e $N$ são independentes, tem-se
			\begin{align}
				M h'(M) = A,
			\end{align}
			onde $A=constante$. E, portanto,
			\begin{align}
				h(M) = A\ln M + C,
			\end{align}
			com $C=constante$. Restaurando as variáveis originais 
			\begin{align}
				m f\left(\frac{1}{m}\right) = A \ln \left(\frac{1}{m}\right) + C,
			\end{align}
			\begin{align}
				f\left(\frac{1}{m}\right) = \frac{A}{m} \ln \left(\frac{1}{m}\right) + \frac{C}{m}.
			\end{align}
			Se a probabilidade de obter um determinado resultado em um experimento for $1$, a incerteza sobre esse resultado deve ser nula. Logo, $f(1)=0$, o que conduz à
			\begin{align}
				f(1) = C = 0,
			\end{align}
			portanto,
			\begin{align}
				f\left(\frac{1}{m}\right) = -\frac{A}{m} \ln \left({m}\right).
			\end{align}
			Usando a condição (\ref{segunda_cond_shannon}) dada pela equação (\ref{caqui}), tem-se
			\begin{align}
				\frac{d}{dm} \left[m f\left(\frac{1}{m}\right)\right] = -\frac{A}{m} \geq 0,
			\end{align}
			então, $A$ deve ser negativa. Fazendo a suposição de que $A=-k$, com $k>0$, e restaurando $p = 1/m$
			\begin{align}
				f(p) = -k p\ln p,
			\end{align}
			Usando a equação (\ref{ceu}), encontra-se 
			\begin{align}
				S(p_{i}) = -k \sum_{i=1}^{W} p_{i} \ln p_{i}.
			\end{align}
			Isso finaliza a demonstração.
		\end{proof}

		\subsubsection*{TEOREMA DE KHINCHIN}

		{Considere o caso em que dois sistemas, $A$ e $B$, são probabilisticamente dependentes, ou seja, o que ocorre com $A$ depende do que ocorre com $B$, ou vice-versa. Será denotada por $q_{k\ell}$ a probabilidade de um evento $B_{\ell}$ acontecer em $B$, dado que o evento $A_{k}$, com probabilidade $p_{k}$, já ocorreu em $A$. Para exemplificar, considere o lançamento de dois dados. Qual é a probabilidade de, após o lançamento, a soma das faces ser 8, dado que as faces são dois números ímpares? Nesse exemplo, a probabilidade da soma ser 8 está condicionada ao fato de as faces dos dados serem ímpares. Logo, o evento $A$ será denotado pela soma ser 8, e o evento $B$, condicionado ao evento $A$, é a probabilidade de as faces serem ímpares. A probabilidade conjunta dos sistemas $A$ e $B$ é dada por
			\begin{align}
				\pi_{k\ell} = p_{k}q_{k\ell}.
		\end{align}}
		
		{Utilizando o sistema conjunto $A$ e $B$, pode-se calcular a entropia a partir da probabilidade conjunta de $A$ e $B$ 
			\begin{align}
				S(A+B) &= - \sum_{k}\sum_{\ell} p_{k}q_{k\ell} \ln \left(p_{k}q_{k\ell}\right) \\
				&=- \sum_{k}\sum_{\ell} p_{k}q_{k\ell}\left[\ln p_{k} + \ln q_{k\ell}\right] .
			\end{align}
			 Usando a condição de normalização $\sum_{\ell} q_{k\ell}=1$ e definindo $S_{k}\equiv - \sum_{\ell} q_{k\ell}\ln q_{k\ell}$, obtém-se
			\begin{align}
				S(A+B) = S(A) + \sum_{k}p_{k}S_{k}(B).
			\end{align} 
			Definindo $S(B|A)\equiv \sum_{k}p_{k}S_{k}(B)$, como a entropia condicional do sistema $B$ condicionado ao sistema $A$, reescreve-se
			\begin{align}
				S(A + B) = S(A) + S(B|A).
			\end{align}
			No caso em que $A$ e $B$ são sistemas probabilisticamente independentes, a equação acima se reduz a
			\begin{align}
				S(A + B) = S(A) + S(B).
		\end{align}}
		
		{Para estabelecer o teorema de unicidade de Khinchin, considere a existência de um funcional $S(p_{i})$ que satisfaça as seguintes propriedades \cite{book:111623}:
			\begin{enumerate}
				\item \label{primeira_cond_khinchin} Para um dado $W$ e para $\sum_{i=1}^{W} p_{i}$, a função $S(p_{1},p_{2},\dots,p_{W})$ toma seu maior valor para $p_{i}=\frac{1}{W}$ $(i=1,2,\dots,W)$; 
				\item \label{segunda_cond_khinchin} S(A + B) = S(A) + S(B|A); 
				\item \label{terceira_cond_khinchin} $S(p_{1},p_{2},\dots,p_{W},0)=S(p_{1},p_{2},\dots,p_{W})$;
			\end{enumerate}
			\begin{theorem}[Khinchin]
				Seja $S(p_{1},p_{2},\dots,p_{W})$ uma função definida para qualquer inteiro $W$ e para todos os valores $p_{1},p_{2},\dots,p_{W}$ tais que $p_{i}\geq 0\quad (i=1,2,\dots,W)$, $\sum_{i=1}^{W} p_{i}=1$. Se para qualquer $W$ essa função for contínua com respeito a todos os seus argumentos e se ela satisfizer as propriedades (\ref{primeira_cond_khinchin}), (\ref{segunda_cond_khinchin}) e (\ref{terceira_cond_khinchin}), então 
				\begin{align}
					S(p_{1},p_{2},\dots,p_{W}) = -k \sum_{i=1}^{W} p_{i} \ln p_{i} \qquad (k>0).
				\end{align}
			\end{theorem}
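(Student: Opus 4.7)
\medskip
\noindent\textbf{Esboço da demonstração.} A estratégia é reduzir o problema ao caso equiprovável, determinar a entropia nesse caso via uma equação funcional, e depois recuperar o caso geral empregando a aditividade condicional.

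Defino $L(W)\equiv S(1/W,\ldots,1/W)$. A combinação da propriedade (\ref{primeira_cond_khinchin}) com a expansibilidade (\ref{terceira_cond_khinchin}) garante que $L$ é não decrescente em $W$, pois
\begin{align*}
L(W) = S(\underbrace{1/W,\ldots,1/W}_{W},0) \leq L(W+1).
\end{align*}
Tomando em seguida dois sistemas probabilisticamente independentes $A$ e $B$, com $W$ e $M$ resultados equiprováveis, respectivamente, a distribuição conjunta é uniforme sobre $WM$ pontos e $S(B|A)=L(M)$, de modo que a propriedade (\ref{segunda_cond_khinchin}) fornece a equação funcional
\begin{align*}
L(WM) = L(W) + L(M).
\end{align*}

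O passo crítico é mostrar que a única função não decrescente em $\mathbb{N}$ satisfazendo essa equação é $L(W)=k\ln W$. Iterando, obtém-se $L(W^n)=nL(W)$. Dados inteiros $W,M\geq 2$ e $n$, escolho $m=m(n)$ tal que $W^m \leq M^n < W^{m+1}$. Aplicando $L$ e dividindo por $nL(W)$, chega-se a $m/n \leq L(M)/L(W) \leq (m+1)/n$; como a mesma desigualdade vale com $\log M/\log W$ no lugar de $L(M)/L(W)$, fazendo $n\to\infty$ segue $L(M)/L(W)=\log M/\log W$, isto é, $L(W)=k\ln W$, com $k>0$ pela monotonicidade.

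Em seguida, estendo para probabilidades racionais. Escrevendo $p_i=g_i/N$ com $g_i\in\mathbb{N}$ e $\sum_i g_i = N$, interpreto o experimento uniforme sobre $N$ pontos como composição de $A$ (escolha do bloco $i$, com probabilidades $p_i$) e $B$ (uniforme sobre os $g_i$ pontos do bloco escolhido, com entropia $L(g_i)$). A propriedade (\ref{segunda_cond_khinchin}) fornece
\begin{align*}
L(N) = S(p_1,\ldots,p_W) + \sum_{i=1}^{W} p_i\, L(g_i),
\end{align*}
e, substituindo $L(n)=k\ln n$, resulta $S(p_1,\ldots,p_W) = -k\sum_i p_i \ln p_i$. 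A continuidade de $S$, hipótese explícita do teorema, permite estender a fórmula a probabilidades reais quaisquer via aproximação racional compatível com a normalização. O principal obstáculo é o passo intermediário --- passar da equação multiplicativa para a forma logarítmica usando apenas monotonicidade ---; os demais passos decorrem de manipulações diretas dos axiomas.
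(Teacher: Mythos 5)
Sua demonstração está correta e segue a rota clássica de Khinchin. O artigo, contudo, não apresenta demonstração própria deste teorema --- limita-se a remeter o leitor a \cite{book:111623} --- de modo que não há prova no texto com a qual comparar diretamente; o seu esboço é, essencialmente, o argumento daquela referência. Vale notar o contraste com a prova do teorema de Shannon dada no artigo, que parte da hipótese adicional (não contida nos axiomas) de que $S=\sum_i f(p_i)$ e recorre à diferenciabilidade; o seu caminho dispensa ambas as suposições: a monotonicidade de $L(W)=S(1/W,\dots,1/W)$ é deduzida dos axiomas (\ref{primeira_cond_khinchin}) e (\ref{terceira_cond_khinchin}), a equação $L(WM)=L(W)+L(M)$ é resolvida pelo sanduíche aritmético $W^{m}\leq M^{n}<W^{m+1}$ sem derivadas, e o caso racional decorre da aditividade condicional (\ref{segunda_cond_khinchin}) aplicada à partição em blocos $p_i=g_i/N$, com a continuidade fechando o caso real. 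O único ponto a lapidar é a afirmação de que $k>0$ ``pela monotonicidade'': o caráter não decrescente de $L$ garante apenas $k\geq 0$, e de fato $S\equiv 0$ satisfaz todos os axiomas listados, de modo que concluir $k>0$ exige excluir explicitamente a solução trivial (por exemplo, normalizando $S(1/2,1/2)>0$) --- ressalva que atinge o enunciado tal como transcrito, não apenas a sua prova. Convém ainda registrar, no passo racional, que blocos com $g_i=0$ são absorvidos pela expansibilidade.
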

			A demonstração do teorema enunciado acima pode ser encontrada em \cite{book:111623}.}
		
		\subsection{ENSEMBLE MICROCANÔNICO E ENSEMBLE CANÔNICO}
		
A entropia de Boltzmann-Gibbs, expressa pela equação~\eqref{entropia_BG_estados_discretos}, pode ser associada a diferentes tipos de sistemas físicos. Em física estatística, dois ensembles introdutórios são frequentemente analisados: o ensemble microcanônico e o ensemble canônico. O ensemble microcanônico representa um sistema idealizado, completamente isolado do ambiente, no qual a energia interna, o número de partículas e o volume permanecem constantes. Por outro lado, no ensemble canônico, o sistema pode trocar energia com o meio externo. É, portanto, considerado em contato térmico com um reservatório a temperatura constante. Neste caso, a energia interna média do sistema é dada por
\begin{align}\label{energia_media_ensemble_canonico}
	\langle E\rangle = \sum_{i=1}^{W} p_{i}\varepsilon_{i},
\end{align}
onde $p_i$ representa a probabilidade de o sistema estar no estado $\varepsilon_i$. Quando o volume do sistema $V$ é muito menor que o volume do reservatório $V_R$, o comportamento energético do sistema pode ser descrito de forma confiável por esse formalismo.

Do ponto de vista da teoria da informação \cite{Jaynes:1957zza, Cover:2005lom,natal2021entropy}, o ensemble microcanônico corresponde a um cenário de incerteza máxima, no qual todos os microestados acessíveis são equiprováveis e nenhuma informação adicional está disponível, levando a uma entropia máxima $S = \ln W$. Já o ensemble canônico se associa à situação em que se deseja maximizar a entropia sob uma restrição de valor médio (como a energia), resultando em uma distribuição de probabilidades que expressa conhecimento parcial sobre o sistema. 

A análise desses ensembles permite explorar propriedades matemáticas e físicas relevantes da entropia, conforme o tipo de interação do sistema com o ambiente.
		
		\begin{proposition}[Ensemble microcanônico]
			Em um sistema físico isolado, a entropia de Boltzmann-Gibbs \eqref{entropia_BG_estados_discretos}, sob a condição de normalização da probabilidade \eqref{condicao_de_normalizacao}, assume seu valor máximo dado pela equação \eqref{entropia_BG_estados_igual}. A probabilidade é dada por $p_{i}=1/W$, onde $W$ é o número de estados acessíveis ao sistema.
		\end{proposition}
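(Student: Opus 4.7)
The plan is to cast the statement as a constrained optimization problem and solve it with the method of Lagrange multipliers, using only the normalization \eqref{condicao_de_normalizacao} as a constraint (since the microcanonical ensemble imposes no further average-value restriction). I would define the auxiliary functional
\begin{align*}
\mathcal{L}(\{p_i\},\alpha) = -k\sum_{i=1}^{W} p_i \ln p_i - \alpha\left(\sum_{i=1}^{W} p_i - 1\right),
\end{align*}
and impose $\partial \mathcal{L}/\partial p_j = 0$ for every $j = 1,\dots,W$. This yields $-k(\ln p_j + 1) - \alpha = 0$, so that $p_j = \exp\!\left(-1 - \alpha/k\right)$, a value that does not depend on the index $j$.

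Next I would invoke the normalization \eqref{condicao_de_normalizacao}: since every $p_j$ equals the same constant, summing over the $W$ accessible states forces $W p_j = 1$, hence $p_j = 1/W$ for all $j$. Substituting this uniform distribution back into \eqref{entropia_BG_estados_discretos} gives
\begin{align*}
S_{BG} = -k \sum_{i=1}^{W} \frac{1}{W}\ln\!\frac{1}{W} = k \ln W,
\end{align*}
which is precisely \eqref{entropia_BG_estados_igual}.

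Finally I would confirm that this critical point is indeed a maximum rather than a minimum or a saddle point. Two routes are available: either compute the Hessian of $-k\sum p_i \ln p_i$, which is diagonal with entries $-k/p_i < 0$, showing strict concavity on the interior of the probability simplex; or simply appeal to Property \ref{BG_fouth_ppdd}, already proved, which guarantees that $S_{BG}$ is a strictly concave functional of $\{p_i\}$, so that any interior critical point of the constrained problem must be the unique global maximum on the simplex. I do not foresee a serious obstacle in the calculation; the only subtle point is the justification that the extremum is a maximum, but the concavity property established earlier resolves this immediately and makes the argument self-contained.
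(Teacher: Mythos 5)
Your proposal is correct and follows essentially the same route as the paper: a Lagrange functional with the normalization constraint, the observation that stationarity forces all $p_i$ to coincide, normalization giving $p_i=1/W$, and substitution yielding $S_{BG}=k\ln W$. The one place you genuinely improve on the paper is the final verification that the critical point is a maximum: the paper differentiates $k\ln W$ twice with respect to $W$, which tests concavity in the number of states rather than in the probabilities being varied, whereas your Hessian in the $p_i$ (diagonal with entries $-k/p_i<0$) or the appeal to the already-established concavity of $S_{BG}$ correctly certifies a global maximum on the probability simplex.
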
 
		
		\begin{proof}
			Para demonstrarmos essa afirmação, utilizaremos o método dos multiplicadores de Lagrange. Considere uma função dada por
			\begin{align}
				\mathcal{L} &= S_{BG} -\lambda\left(1 - \sum_{i=1}^{W}p_{i}\right) \nonumber\\&= -k\sum_{i=1}^{W}p_{i}\ln p_{i}-\lambda\left(1 - \sum_{i=1}^{W}p_{i}\right).
			\end{align}
			A variação de $\mathcal{L}$ nos fornecerá 
			\begin{align}
				\delta\mathcal{L} &= \sum_{i=1}^{W}\left(-k\delta p_{i}\ln p_{i}  - k\delta p_{i} + \lambda\delta p_{i}\right) \\&=\sum_{i=1}^{W}(-k\ln p_{i} - k + \lambda)\delta p_{i}= 0,
			\end{align}
			o que resulta
			\begin{align}
				-k\ln p_{i} - k + \lambda = 0, 
			\end{align}
			logo,
			\begin{align}\label{ppdd_normalizacao_exp}
				p_{i} = \exp\left(\frac{\lambda - k}{k}\right).
			\end{align}
			Usando a condição de normalização
			\begin{align}
				\sum_{i=1}^{W} p_{i} = \sum_{i=1}^{W} \exp\left(\frac{\lambda - k}{k}\right) = W\exp\left(\frac{\lambda - k}{k}\right)= 1,
			\end{align}
			obtemos
			\begin{align}
				\exp\left(\frac{\lambda -k}{k}\right) = \frac{1}{W},
			\end{align}
			\begin{align}\label{exp_para_lambda_microcanonico}
				\lambda = k\ln\left(\frac{1}{W}\right) + k.
			\end{align}
			Substituindo a equação \eqref{exp_para_lambda_microcanonico} na equação \eqref{ppdd_normalizacao_exp}, obtém-se
			\begin{align}
				p_{i} = \frac{1}{W}.
			\end{align}
			Logo a entropia de Boltzmann-Gibbs assume a forma
			\begin{align}\label{entropy_microcanonical_ensemble}
				S_{BG} = -k\sum_{i=1}^{W}\frac{1}{W}\ln\left(\frac{1}{W}\right) = k\ln W . 
			\end{align}
			
			Derivando duas vezes a equação acima, obtemos
			\begin{align}
				\frac{\partial^{2} S_{BG}}{\partial W^{2}} = -\frac{k}{W^{2}}<0,
			\end{align}
			e, portanto, concluí-se que a entropia assume seu valor máximo.
		\end{proof}
		
		\begin{proposition}[Ensemble canônico]
			Um sistema físico em contato com um reservatório térmico, que satisfaz a entropia de Boltzmann-Gibbs \eqref{entropia_BG_estados_discretos}, sob a condição de normalização \eqref{condicao_de_normalizacao}, com energia média dada pela equação \eqref{energia_media_ensemble_canonico} e à luz do princípio da máxima entropia, é caracterizado pela função de partição dada por
			\begin{align}
				Z = \sum_{i=1}^{W} e^{-\frac{\beta \varepsilon}{k}}.
			\end{align}
		\end{proposition}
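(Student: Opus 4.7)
The plan is to mirror the previous proposition for the microcanonical ensemble, this time with two constraints rather than one: the normalization \eqref{condicao_de_normalizacao} and the mean-energy constraint \eqref{energia_media_ensemble_canonico}. I would introduce two Lagrange multipliers, $\lambda$ for normalization and $\beta$ for the energy, and build the functional
\begin{align*}
\mathcal{L} = S_{BG} - \lambda\left(\sum_{i=1}^{W} p_i - 1\right) - \beta\left(\sum_{i=1}^{W} p_i \varepsilon_i - \langle E\rangle\right).
\end{align*}

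Imposing $\delta\mathcal{L} = 0$ through the variation in each $p_i$ is a direct extension of the calculation in the microcanonical case: the term linear in $\varepsilon_i$ simply carries through. After collecting factors, the stationarity condition at each $i$ reduces to $-k\ln p_i - k + \lambda - \beta\varepsilon_i = 0$, which solves to an exponential form of the type $p_i = C\, e^{-\beta\varepsilon_i/k}$, with $C$ an overall constant absorbing $\lambda$ and $k$. Imposing normalization then identifies $C^{-1} = \sum_{i=1}^{W} e^{-\beta\varepsilon_i/k}$, which is precisely the partition function $Z$ claimed in the statement.

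The main obstacle I anticipate is not computational but conceptual: one must justify that the extremum produced by the Lagrange conditions is in fact a \emph{maximum} of $S_{BG}$ on the admissible set, as required by the \emph{princípio da máxima entropia} invoked in the hypothesis. I would appeal to Property \ref{BG_fouth_ppdd}: since $S_{BG}$ is concave in $\{p_i\}$ and the constraint set defined by the two linear relations is convex, any stationary point of the Lagrangian is the unique global maximum on that set. A secondary delicate point is the interpretation of $\beta$ itself: the extremization only determines $\beta$ implicitly through the relation $\langle E\rangle = -\partial_\beta \ln Z$ (up to factors of $k$), and its identification with an inverse temperature is a physical reading that sits outside the purely variational argument and should be flagged but not conflated with the proof.
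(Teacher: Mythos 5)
Your proposal follows essentially the same route as the paper: a Lagrangian combining $S_{BG}$ with the normalization and mean-energy constraints, stationarity yielding $p_i \propto e^{-\beta\varepsilon_i/k}$, and normalization identifying $Z=\sum_{i=1}^{W}e^{-\beta\varepsilon_i/k}$. Your additional observations --- that concavity of $S_{BG}$ on the convex constraint set guarantees the stationary point is the global maximum, and that $\beta$ is only fixed implicitly by the energy constraint --- are correct refinements the paper does not spell out, but they do not change the argument.
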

		
		\begin{proof}
			Para realizar a demonstração da proposição acima, utilizaremos novamente o método de multiplicadores dos Lagrange. Para isso, introduziremos a função 
			\begin{align}
				\mathcal{L} &= S_{BG} - \alpha \left(1 - \sum_{i=1}^{W}p_{i}\right) -\beta \left(\langle E \rangle - \sum_{i=1}^{W}\varepsilon_{i}p_{i}\right) \\&= -k\sum_{i=1}^{W} p_{i}\ln p_{i} - \alpha \left(1 - \sum_{i=1}^{W}p_{i}\right) -\beta \left(\langle E \rangle - \sum_{i=1}^{W}\varepsilon_{i}p_{i}\right).
			\end{align}
			A variação da função acima fornece 
			\begin{align}
				\delta \mathcal{L}  &= \sum_{i=1}^{W}\left(-k\ln p_{i}\delta{p_{i}} -k\delta p_{i} +\alpha\delta p_{i} +\beta\varepsilon_{i}\delta p_{i}\right) \\ &= \sum_{i=1}^{W}\left(-k\ln p_{i} -k +\alpha +\beta\varepsilon_{i}\right)\delta p_{i} = 0,
			\end{align}
			resultando em
			\begin{align}\label{olho_olho}
				\left(-k\ln p_{i} -k +\alpha +\beta\varepsilon_{i}\right) = 0.
			\end{align}
			Da equação (\ref{olho_olho}), obtemos que
			\begin{align}
				p_{i} = \exp\left(\frac{k - \alpha - \beta\varepsilon_{i}}{k}\right).
			\end{align}
			Usando a condição de normalização da probabilidade, temos 
			\begin{align}
				\sum_{i=1}^{W} p_{i} =  \sum_{i=1}^{W} \exp\left(\frac{k - \alpha - \beta\varepsilon_{i}}{k}\right) = 1,
			\end{align}
			\begin{align}
				\exp\left(\frac{k - \alpha }{k}\right) = \frac{1}{\sum_{i={1}}^{W}\exp\left(\frac{- \beta\varepsilon_{i}}{k}\right)}.
			\end{align}
			Logo, podemos escrever a probabilidade como 
			\begin{align}\label{ppbb_fun_part_BG}
				p_{i} = \frac{e^{-\frac{\beta\varepsilon_{i}}{k}}}{Z},
			\end{align}
			onde 
			\begin{align}
				Z = \sum_{i=1}^{W}e^{-\frac{\beta\varepsilon_{i}}{k}},
			\end{align}
			é chamada de função de partição. Substituindo a equação \eqref{ppbb_fun_part_BG} para a probabilidade na equação da entropia de Boltzmann-Gibbs, obtemos
			\begin{align}
				S_{BG} &= -k \sum_{i=1}^{W} p_{i} \ln\left(\frac{e^{-\frac{\beta\varepsilon_{i}}{k}}}{Z}\right) \\&= -k \sum_{i=1}^{W} p_{i} \left[\ln\left(e^{-\frac{\beta\varepsilon_{i}}{k}}\right) - \ln \left(Z\right)\right] \\ &= -k \sum_{i=1}^{W} p_{i} \left[-\frac{\beta\varepsilon_{i}}{k} - \ln \left(Z\right)\right] \\&= \beta\sum_{i=1}^{W} p_{i}\varepsilon_{i} + k\ln\left(Z\right).
			\end{align}
			Usando a equação \eqref{energia_media_ensemble_canonico} para a energia, encontramos que a entropia é dada por
			\begin{align}\label{entropy_canonical_ensemble}
				S_{BG} = \beta \langle E \rangle + k\ln\left(Z\right).
			\end{align}
		\end{proof}

As equações \eqref{entropy_microcanonical_ensemble} e \eqref{entropy_canonical_ensemble}, encontradas para a entropia nos ensembles microcanônico e canônico, respectivamente, não apresentam relações explícitas e não se restringem à termodinâmica. Essa conexão entre os ensembles e a termodinâmica pode ser encontrada de forma detalhada no capítulo 17 da referência \cite{book:18204} e na referência \cite{book:1322644}. Ou nas referências \cite{Jaynes:1957zza, Cover:2005lom} para o contexto da teoria da informação.

\section{MECÂNICA ESTATÍSTICA NÃO ADITIVA}\label{mec_ext_nao_ext}

A mecânica estatística de Boltzmann-Gibbs, apesar de central na física, está calcada no fato de que a entropia é aditiva para a junção de dois sistemas probabilisticamente independentes, além de não ser universal. Por outro lado, de acordo com Tsallis, ``\textit{a entropia é um conceito delicado e poderoso, construído cuidadosamente para uma classe de sistemas}'' \cite{book:166140}. Nesse sentido, existe uma classe de sistemas físicos para os quais a entropia de Boltzmann-Gibbs pode ser adequada no contexto da termodinâmica, como o gás de Van der Waals. Mas para outros, como o átomo de hidrogênio \cite{PhysRevE.51.6247}, ela não é.

Abordagens alternativas para o conceito de entropia têm sido amplamente investigadas na literatura. Dentre elas, destaca-se a proposta introduzida por Tsallis em seu influente trabalho de 1988~\cite{Tsallis:1987eu}. Nessa formulação, a entropia de Boltzmann-Gibbs é generalizada por meio de um parâmetro real $q$, que caracteriza o grau de não extensividade do sistema. Essa generalização viola a aditividade usual da entropia e permite acomodar propriedades específicas de certos sistemas complexos, como aquelas relacionadas à dinâmica microscópica~\cite{tsallis2009nonadditive}. Outras formulações generalizadas também têm sido exploradas, incluindo a entropia de Rényi~\cite{renyi1970probability} e a entropia de Tsallis-Cirto~\cite{Tsallis:2012js}. Esta última, proposta como uma alternativa para restaurar a extensividade da entropia de Bekenstein-Hawking \cite{Bekenstein:1972tm, Hawking:1975vcx}, no contexto da física de buracos negros.

A entropia de Boltzmann-Gibbs não é derivada diretamente da dinâmica microscópica do sistema, sendo, portanto, assumida como um postulado  da teoria estatística. No entanto, uma vez estabelecida a dinâmica microscópica, todas as demais grandezas estatísticas podem ser determinadas a partir dela. Dado isso, qualquer proposta de generalização da entropia também não deve partir de uma descrição microscópica, mas sim ser introduzida de maneira axiomática ou heurística. Nesse contexto, uma abordagem possível consiste no uso de metáforas ou princípios inspiradores, como exemplificado na proposta de Tsallis~\cite{tsallis2003introduction,Tsallis:2019giw}.

Considere a equação diferencial mais simples possível. Concluímos, sem muita elaboração 
\begin{align}\label{eq_dif_0}
	y'(x) = 0,
\end{align}
com $y'(x)=\frac{dy}{dx}$ e $y=\text{constante}$. Outra equação simples que pode ser considerada é 
\begin{align}\label{eq_dif_1}
	y'(x) = 1,
\end{align}
cuja solução é dada por $y = x + \text{constante}$, onde a condição inicial sempre pode ser ajustada para que a constante seja nula $(y(0)=0)$. Na tentativa de tornar menos restritas as equações acima, pode-se propor a seguinte equação diferencial 
\begin{align}\label{eq_dif_y}
	y'(x) = y,
\end{align}
onde a solução, com a condição inicial $x(1)=0$, é dada por 
\begin{align}\label{solucao_dy_dx_y}
	x=\ln y,
\end{align}
e a inversa é $y =e^{x}$. A solução obtida na equação~\eqref{solucao_dy_dx_y} apresenta a mesma forma funcional da entropia de Boltzmann-Gibbs, preservando, inclusive, a propriedade de aditividade.
\begin{align}
	\ln(x_{A}x_{B}) = \ln x_{A} + \ln x_{B}.
\end{align}
Uma generalização das equações diferenciais discutidas acima pode ser dada por
\begin{align}
	y'(x) = y^q \qquad (q\in\mathbb{R}),
\end{align}
onde a solução é dada pelo \textit{q-logaritmo}
\begin{align}
	x = \frac{y^{1-q} - 1}{1-q} \equiv \ln_{q}y,
\end{align}
e a inversa é dada pela $q-exponencial$, definida por 
\begin{align}
	y = [1 + (1-q)x]^{\frac{1}{1-q}}\equiv e_{q}^x.
\end{align}
Os resultados das equações diferenciais \eqref{eq_dif_0}, \eqref{eq_dif_1} e \eqref{eq_dif_y} podem ser retomados para $q\to -\infty$, $q=0$ e $q=1$, respectivamente. A solução satisfaz a chamada pseudo-aditividade
\begin{align}\label{pseudo_aditividade_q_logaritmo}
	\ln_{q} (x_{A}x_{B}) = \ln_{q} x_{A} + \ln_{q} x_{B} + (1-q)\ln_{q} x_{A}\ln_{q} x_{B}.
\end{align}
A abordagem apresentada fornece um caminho natural para a generalização da entropia de Boltzmann-Gibbs, levando à chamada \textit{q-entropia}, também conhecida como entropia de Tsallis. Essa generalização é construída a partir do uso do \textit{q-logaritmo}, de modo que a entropia de Tsallis é definida como
\begin{align}\label{Tsallis_entropia_estados_discretos}
	S_{q}(p_{i}) &= k\frac{1 - \sum_{i=1}^{W} p_{i}^{q}}{q-1}=k\sum_{i=1}^{W} p_{i}\frac{(1-p_{i}^{q-1})}{q-1}\nonumber\\&=k\sum_{i=1}^{W} p_{i}\frac{[1-\left(\frac{1}{p_{i}}\right)^{1-q}]}{q-1}= k\sum_{i=1}^{W} p_{i}\ln_{q} \left(\frac{1}{p_{i}}\right),  
\end{align}
com a condição de normalização
\begin{align}
	\sum_{i}^{W}p_{i}=1.
\end{align}

{Reescrevendo a definição da entropia de Tsallis, dada pela equação \eqref{Tsallis_entropia_estados_discretos}, obtemos
	\begin{align}
		S_{q}(\{p_{i}\}) &= k\frac{1 - \sum_{i}p_{i}p_{i}^{q-1}}{q-1}\\
		&= k\frac{1 - \sum_{i}p_{i}\exp[(q-1)\ln p_{i}]}{q-1}\label{eq_entrop_tsallis_p_limite},
	\end{align}	
	sendo $k$ uma constante arbitrária. Pode-se expandir a exponencial da equação \eqref{eq_entrop_tsallis_p_limite} em série de Taylor em torno de $q=1$
	\begin{align}
		S_{q}(\{p_{i}\}) &\approx k\frac{1 - \sum_{i}p_{i}[1 + (q-1)\ln p_{i}]}{q-1}\\
		&= k\frac{1 - \sum_{i}p_{i} - (q-1)\sum_{i} p_{i}\ln p_{i}}{q-1} \\
		&= -k\sum_{i} p_{i} \ln p_{i} = S_{BG}(\{p_{i}\}).
	\end{align}
	Mostra-se, então, que a entropia de Boltzmann-Gibbs é retomada em uma vizinhança de $q\to 1$. De modo que a entropia de Tsallis é de fato uma generalização, que contém a entropia de Boltzmann-Gibbs como um caso particular.}

Para estados equiprováveis, onde $p_{i}=1/W$, a entropia de Tsallis torna-se
\begin{align}\label{entropia_tsallis_equiprobabilistica}
	S_{q}(W) = k\frac{W^{1-q}-1}{1-q} = k\ln_{q} W,
\end{align}
analogamente, como dada pela equação para estados equiprováveis de Boltzmann-Gibbs \eqref{entropia_BG_estados_igual}.

Como a entropia de Tsallis recupera a entropia de Boltzmann-Gibbs em um limite específico, sua interpretação permanece análoga à discutida na subseção~\ref{def_entropia_de_bolt_gibbs_shannon}. No entanto, nesse contexto, os sistemas considerados apresentam correlações características entre seus componentes, o que justifica a necessidade de uma formulação mais geral. O parâmetro $q$ está associado à essa correlação. A entropia de Tsallis pode, assim, ser interpretada como uma medida da ignorância em relação ao estado de um sistema físico, estendendo a aplicabilidade da abordagem mecânica estatística clássica. A motivação para essa generalização é justamente acomodar cenários físicos que escapam à descrição tradicional de Boltzmann-Gibbs. \textcolor{black}{Nas seções seguintes estabeleceremos os aspectos formais da entropia de Tsallis. Algumas de suas aplicações podem ser encontradas na Seção \ref{aplicacoes}. Para uma discussão mais ampla sobre aplicações, recomendamos as referências ~\cite{tsallis_2021} e ~\cite{book:166140}.
}

\subsection{PROPRIEDADES DA ENTROPIA DE TSALLIS}

Dado que a entropia de Tsallis é uma generalização baseada no \textit{q-logaritmo}, é natural que não seja aditiva, assim como expresso na equação \eqref{pseudo_aditividade_q_logaritmo}. De fato, ela satisfaz a pseudo-aditividade.
\begin{ppdd}[Pseudo-aditividade]
	Dados dois sistemas probabilisticamente independentes, $A$ e $B$, tal que a probabilidade conjunta desses dois sistemas seja dada por $p_{ij}^{(A+B)}=p_{i}^{(A)}p_{j}^{(B)}$, a soma da entropia desses sistemas compostos é fornecida por
	{\begin{align}\label{pbbs_ij_tsallis}
			\sum_{i,j} \left[p_{ij}^{(A+B)}\right]^{q} = \sum_{i} \left[p_{i}^{(A)}\right]^{q}\sum_{j} \left[p_{j}^{(B)}\right]^{q},
		\end{align}
		tomando o logaritmo da equação \eqref{pbbs_ij_tsallis}, obtém-se
		\begin{align}
			&\ln\left\{\sum_{i,j} \left[p_{ij}^{(A+B)}\right]^{q}\right\} \nonumber\\&= \ln\left\{\sum_{i} \left[p_{i}^{(A)}\right]^{q}\right\} + \ln\left\{\sum_{j} \left[p_{j}^{(B)}\right]^{q}\right\}.
		\end{align}
		Escrevendo cada termo da equação acima em termos da entropia de Tsallis \eqref{Tsallis_entropia_estados_discretos}, tem-se
		\begin{align}\label{logaritmo_formal_tsallis}
			&\ln\left[1 + (1-q)\frac{S_{q}(A+B)}{k}\right] \nonumber\\&= \ln\left[1 + (1-q)\frac{S_{q}(A)}{k}\right] + \ln\left[1 + (1-q)\frac{S_{q}(B)}{k}\right],
		\end{align}
		onde $S_{q}(A)=S_{q}(\{p_{i}^{(A)}\})$, $S_{q}(B)=S_{q}(\{p_{i}^{(B)}\})$ e $S_{q}(A+B)=S_{q}(\{p_{ij}^{(A+B)}\})$. Reescrevendo a equação (\ref{logaritmo_formal_tsallis}) como
		\begin{align}
			&1 + (1-q)\frac{S_{q}(A+B)}{k} \nonumber\\&= \left[1 + (1-q)\frac{S_{q}(A)}{k}\right]\left[1 + (1-q)\frac{S_{q}(B)}{k}\right],
		\end{align}
		ela se reduzirá a
	}
	\begin{align}\label{def_pseudo_aditividade}
		\frac{S_{q}(A+B)}{k} = \frac{S_{q}(A)}{k} + \frac{S_{q}(B)}{k} + (1-q)\frac{S_{q}(A)}{k}\frac{S_{q}(B)}{k}.
	\end{align}
	Têm-se os seguintes casos correspondentes para os valores do parâmetro $q$ \cite{book:166140}:

\begin{table}[h!]
	\centering
	\vspace{0.5em}
	\begin{tabular}{l|c|c}
		\toprule
		\textbf{} & \textbf{Parâmetro $q$} & \textbf{Relação $S_{q}(A+B)$} \\
		\midrule
		Superaditividade & $q < 1$ & $\geq S_{q}(A)+S_{q}(B)$ \\
		Aditividade      & $q = 1$ & $= S_{q}(A)+S_{q}(B)$ \\
		Subaditividade   & $q > 1$ & $\leq S_{q}(A)+S_{q}(B)$ \\
		\bottomrule
	\end{tabular}
	\caption{Propriedades da pseudo-aditividade}
	\label{TAB_sQ}
\end{table}
	O caso \textit{superaditivo} ocorre quando a entropia final é maior (ou igual) que a soma das entropias individuais de cada sistema. Já o caso \text{subaditivo} está relacionado ao fato de que a entropia final dos sistemas conjuntos é menor (ou igual) que a soma das entropias individuais (ver Tabela \ref{TAB_sQ}).
\end{ppdd}

\begin{ppdd}[Positividade]
	Se um determinado estado $n_{0}$ do sistema é totalmente conhecido, então tem-se que $p_{n_{0}}=1$ e $p_{n}=0$ $(\forall n\neq n_{0})$, portanto 
	\begin{align*}
		S_{q}=k\frac{1 - \sum_{n=1}^{W}p_{n}^{q}}{q-1}=k\frac{1 - p_{n_{0}}^{q}}{q-1}=0, \qquad (\forall q\in \mathbb{R}).
	\end{align*}
	Entretanto, para qualquer outro caso, tem-se que $p_{n}<1$, para pelo menos dois valores de $n$ e consequentemente, $1/p_{n}>1$. 
	\begin{align}
		S_{q} = k\frac{1 - \sum_{n=1}^{W} p_{n}^{q}}{q-1} =k\sum_{n=1}^{W} p_{n}{\frac{(1-p_{n}^{q-1})}{q-1}} \qquad (\forall  q\in \mathbb{R}),
	\end{align}
	onde
	\begin{align}
		\ln_{q}p_{i} = \frac{(1-\sum_{n=1}^{W}p_{n}^{q-1})}{q-1}.
	\end{align}
	
	Portanto, dado que $\ln_{q} \frac{1}{p_{n}}>0 ,\forall n$, tem-se que
	\begin{align}
		S_{q} = -k\sum_{i}^{W}p_{i}\ln_{q} p_{i} = k\sum_{i}^{W}p_{i}\ln_{q} \frac{1}{p_{i}} >0 \qquad (\forall  q\in \mathbb{R}).
	\end{align} 
	o que demonstra a positividade da entropia de Tsallis. 
\end{ppdd}

\begin{ppdd}[Expansibilidade]
	Se novos estados possíveis forem adicionados à entropia $S_q$, de modo que suas probabilidades sejam nulas, isto é, $p_{W+1} = \dots = p_{W+N} = 0$, então o valor da entropia permanece inalterado. Portanto, se $p_{k}=0, \; \forall k > W$, então
	\begin{align}\label{expansibilidade_Sq}
		S_{q}(p_{1},p_{2},\dots,p_{W}, p_{W+1},\dots,p_{W+N})=S_{q}(p_{1},p_{2},\dots,p_{W}).
	\end{align}
	
	De fato, se forem adicionados $N$ estados existentes à entropia $S_{q}$, tais que a probabilidade com que eles ocorram seja nula, tem-se para $q>0$ e, por meio da definição fornecida pela equação \eqref{Tsallis_entropia_estados_discretos}
	\begin{align}
		&S_{q}(p_{1},p_{2},\dots,p_{W}, p_{W+1},\dots,p_{W+N})\nonumber\\ &= k\frac{1 - \sum_{i=1}^{W+N} p_{i}^{q}}{q-1} \nonumber\\&= k\frac{1 - \left(\sum_{i=1}^{W} p_{i}^{q}+ \sum_{i=W+1}^{N} 0^{q}\right)}{q-1} \nonumber \\ &= k\frac{1 - \sum_{i=1}^{W} p_{i}^{q}}{q-1} = S_{q}(p_{1},p_{2},\dots,p_{W}).
	\end{align} 
	Entretanto, para $q<0$, deve-se levar em conta o fato de que a soma na entropia $S_{q}$ ocorre apenas em estados com probabilidades positivas \cite{book:166140}. Logo, se há $W+N$ estados existentes, porém somente $W$ estados são possíveis de ocorrer e os $N$ estados restantes têm probabilidade nula de ocorrência, então
	\begin{align}
		S_{q}(p_{1},p_{2},\dots,p_{W},0) &= k\frac{1 - \sum_{i=1}^{W} p_{i}^{q}}{q-1} =S_{q}(p_{1},p_{2},\dots,p_{W}).
	\end{align}
\end{ppdd}

Considere a função definida por 
\begin{align}\label{funcao_arbitraria}
	f(x) \equiv \frac{x(1-x^{q-1})}{q-1} = \frac{x - x^{q}}{q-1}.
\end{align}
A derivada segunda dessa função é dada por
\begin{align}
	\frac{d^{2}f(x)}{dx^{2}} = -qx^{(q-2)}.
\end{align}
Portanto, à luz do Corolário (\ref{corolario_segunda_derivada}), temos que $f(x)$ será côncava se $q>1$ e convexa se $q<1$. Com isso, podemos definir a concavidade e convexidade da entropia de Tsallis.

\begin{ppdd}[Convexidade e concavidade]
	Considere dois conjuntos de probabilidades $\{p_{i}\}$ e $\{\tilde{p}_{i}\}$ que estão associados ao mesmo sistema, com $W$ microestados. E a probabilidade intermediária definida pela equação \eqref{prob_intermediaria}. Como a função dada pela equação \eqref{funcao_arbitraria} possui derivada segunda contínua, positiva para $q<0$ e negativa para $q>0$, tem-se, para $q<0$, segundo a equação \eqref{def_concavidade}
	\begin{align}
		k\frac{p_{i}'(1-p_{i}'^{q-1})}{q-1} < \alpha k\frac{p_{i}(1-p_{i}^{q-1})}{q-1} + (1-\alpha)k\frac{\tilde{p}_{i}(1-\tilde{p}_{i}^{q-1})}{q-1}.
	\end{align} 
	Perfazendo a soma $\sum_{i=1}^{W}$ em ambos os lados da desigualdade, obtém-se 
	\begin{align}\label{def_concavidade_q}
		S_{q}(\{p_{i}'\}) < \alpha S_{q}(\{p_{i}\}) + (1-\alpha)S_{q}(\{\tilde{p}_{i}\}) \qquad (q<0),
	\end{align}
	o que demonstra imediatamente a convexidade da entropia $S_{q}$ para $q<0$. Para obter a concavidade de $S_{q}$ para $q>0$, basta tomar as desigualdades contrárias \cite{book:166140}. 
\end{ppdd}

\subsubsection*{TEOREMA DE SANTOS}

{No trabalho de Santos \cite{dos1997generalization}, é proposto um análogo ao teorema de Shannon para a entropia de Tsallis. Parte-se da hipótese de que existe uma função entropia \( S_{q}(p_{i}) \) que satisfaz as seguintes condições:
	\begin{enumerate}
		\item\label{cond_primeira_santos_teo}  $S_{q}(p_{1},\dots,p_{W})$ é uma função contínua com respeito a todos os seus argumentos;
		\item\label{cond_dois_santos_teo} Para um dado conjunto de $W$ estados equiprováveis, ou seja, $p_{i}=1/W$, é uma função monotonamente crescente de $W$;
		\item\label{cond_tres_santos_teo} Para dois sistemas independentes $A$ e $B$, a entropia do sistema composto $A+B$ satisfaz a relação pseudo-aditividade\footnote{Por simplicidade, está sendo tomado $k=1$, a expressão com a constante pode ser encontrada na equação \eqref{def_pseudo_aditividade}.}
		\begin{align}
			\frac{S_{q}(A+B)}{k} = \frac{S_{q}(A)}{k} + \frac{S_{q}(B)}{k} + (1-q)\frac{S_{q}(A)S_{q}(B)}{k^{2}}
		\end{align}
		\item\label{cond_quatro_santos_teo}  Com $W = W_{L} + W_{M}$, sendo $W_{L}$ dado por $L$ termos e $W_{M}$ dado por $M$ termos, tem-se 
		\begin{align}
			p_{L}\equiv \sum_{i=1}^{W_{L}}p_{i}, 
		\end{align}
		\begin{align}
			p_{M}\equiv \sum_{i=1}^{W_{M}}p_{i}, 
		\end{align}
		e
		\begin{align}
			p_{L} + p_{M} = 1,
		\end{align}
		portanto, a entropia pode ser escrita como
		\begin{align}
			S_{q}(\{p_{i}\}) &= S_{q}(p_{L},p_{M}) + p_{L}^{q}S_{q}\left(\left\{\frac{p_{i}}{p_{L}}\right\}\right)\nonumber \\&+ p_{M}^{q}S_{q}\left(\left\{\frac{p_{i}}{p_{M}}\right\}\right). 
		\end{align}
	\end{enumerate}
	
	O enunciado da condição (\ref{cond_quatro_santos_teo}) proposta por Santos, embora apresente uma estrutura distinta da condição (\ref{terceira_cond_shannon}) de Shannon, possui o mesmo significado. Para compreendê-lo, considere um experimento que consiste em retirar bolas de uma caixa contendo 12 bolas de cores distintas: 4 vermelhas, 3 azuis, 2 verdes e 3 amarelas. Em um sorteio aleatório, a probabilidade de retirar uma bola vermelha é $p_{1} = \frac{1}{3}$, uma bola azul $p_{2} = \frac{1}{4}$, uma bola verde $p_{3} = \frac{1}{6}$ e uma bola amarela $p_{4} = \frac{1}{4}$, conforme ilustrado na Figura~(\ref{fig:diagramaexperimentos3}). Os rótulos 1, 2, 3 e 4 correspondem, respectivamente, às cores vermelha, azul, verde e amarela. A entropia associada a esse sistema é dada por $S_{q}\left(\frac{1}{3}, \frac{1}{4}, \frac{1}{6}, \frac{1}{4}\right)$.

	\begin{figure}[h!]
		\centering
		\includegraphics[width=0.75\linewidth]{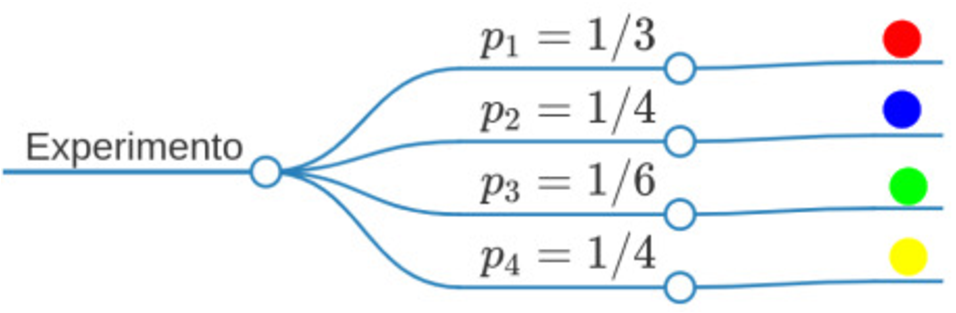}
		\caption{Diagrama ilustrando o experimento com as quatro possibilidades de resultados, cada um associado às probabilidades $p_{1}$, $p_{2}$, $p_{3}$ e $p_{4}$.}
		\label{fig:diagramaexperimentos3}
	\end{figure}

	Considere agora outro experimento com as mesmas 12 bolas do experimento anterior, subdividido em dois sub experimentos consecutivos, como ilustrado na Figura (\ref{fig:diagramaexperimentos4}). O primeiro sub experimento consiste em separar as 12 bolas em dois conjuntos diferentes: um composto por 7 bolas, sendo 4 vermelhas e 3 azuis; o outro por 5 bolas, sendo 2 verdes e 3 amarelas. A probabilidade de selecionar 7 bolas das cores desejadas entre as 12 bolas será $p_{L}=7/12$ e a probabilidade de selecionar as 5 bolas das cores desejadas será de $p_{M}=5/12$.
	
	O segundo sub experimento está condicionado à realização do primeiro. Os conjuntos de 7 e 5 bolas selecionadas são colocados separadamente em uma caixa de sorteio. Do conjunto formado pelas 7 bolas, a probabilidade de retirar uma bola vermelha é $p_{1}/p_{L}=4/7$ e a probabilidade de retirar uma bola azul é $p_{2}/p_{L}=3/7$. Já em relação ao conjunto formado por 5 bolas, a probabilidade de retirar uma bola verde é $p_{3}/p_{M}=2/5$ e de retirar uma bola amarela é $p_{4}/p_{M}=3/5$. As probabilidades $p_{1}/p_{L}$, $p_{2}/p_{L}$, $p_{3}/p_{M}$ e $p_{4}/p_{M}$ são denominadas probabilidades condicionais, pois estão condicionadas ao fato de que no primeiro sub experimento foram selecionados conjuntos de 7 e 5 bolas, cada um com as cores desejadas, como já mencionado anteriormente. 
	
	Nesse experimento formado por dois sub experimentos, a probabilidade de se obter uma bola vermelha (resultado 1) é $p_{1}=1/3$. Isso ocorre porque temos uma probabilidade $p_{L}$ de tirar uma bola vermelha no primeiro sub experimento e uma probabilidade $p_{1}/p_{L}$ de tirar uma bola vermelha no segundo sub experimento. A probabilidade total será fornecida por $p_{L}\cdot\left(\frac{p_{1}}{p_{L}}\right)$. A probabilidade de obter uma bola azul (resultado 2) é $p_{2}=1/4$ e a de obter uma bola verde (resultado 3) é $p_{3}=1/6$. Por fim, a probabilidade de se obter uma bola amarela (resultado 4) é $p_{4}=1/4$. 
	\begin{figure}[h!]
		\centering
		\includegraphics[width=0.95\linewidth]{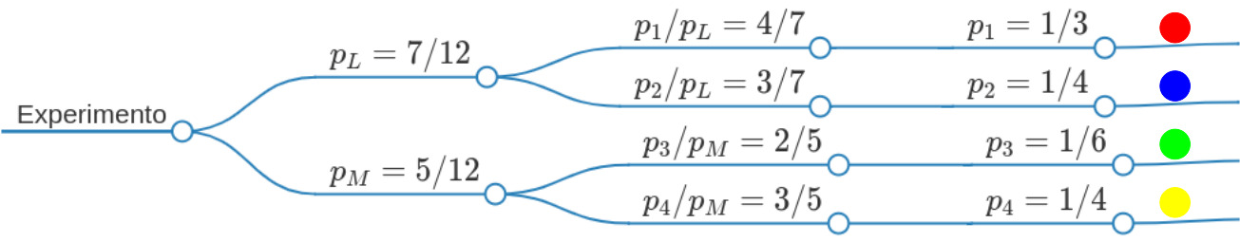}
		\caption{Diagrama ilustrando o experimento. Inicialmente têm-se duas possibilidades de resultados. A escolha com probabilidade $p_{L}=7/12$ se subdivide em outras duas escolhas com probabilidades $p_{1}/p_{L}=4/7$ e $p_{2}/p_{L}=3/7$ que levam aos resultados 1 e 2. Já a escolha $p_{M}=5/12$ se subdivide em duas outras escolhas possíveis com probabilidades $p_{3}/p_{M}=2/5$ e $p_{4}/p_{M}=3/5$, cada uma leva aos resultados 3 e 4, respectivamente.}
		\label{fig:diagramaexperimentos4}
	\end{figure}
Nesse caso, considerando-se as probabilidades dos sub experimentos realizados, a entropia associada ao sistema pode ser escrita como
\begin{align}
	&S_{q}\left(\frac{1}{3},\frac{1}{4},\frac{1}{6},\frac{1}{4}\right) \nonumber\\ &= S_{q}\left(\frac{7}{12},\frac{5}{12}\right) + \left(\frac{7}{12}\right)^{q}S_{q}\left(\frac{4}{7},\frac{3}{7}\right) + \left(\frac{5}{12}\right)^{q}S_{q}\left(\frac{2}{5},\frac{3}{5}\right). 
\end{align}}
{\begin{theorem}[Santos]
		A única função que satisfaz simultaneamente às propriedades (\ref{cond_primeira_santos_teo}), (\ref{cond_dois_santos_teo}) e (\ref{cond_tres_santos_teo}), listadas acima, é a entropia de Tsallis\footnote{Por consistência da notação, está sendo adotado k=1.}
		\begin{align}
			S_{q} = k\frac{1 - \sum_{1}^{W} p_{i}^{q}}{q-1}.
		\end{align}
	\end{theorem}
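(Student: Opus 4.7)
The plan is to mirror the structure of Shannon's proof, but replacing ordinary additivity by the pseudo-additivity in condition (\ref{cond_tres_santos_teo}) and using the composition rule in condition (\ref{cond_quatro_santos_teo}) as the bridge from the equiprobable case to arbitrary distributions. Note that although the theorem cites only (\ref{cond_primeira_santos_teo})--(\ref{cond_tres_santos_teo}), condition (\ref{cond_quatro_santos_teo}) is needed to pin down the dependence on arbitrary $\{p_i\}$; I will use all four.

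\textbf{Step 1 (equiprobable case).} I would first restrict to uniform distributions and set $f(W) \equiv S_q(1/W,\dots,1/W)/k$. Two independent equiprobable subsystems of sizes $m$ and $n$ combine into a uniform distribution over $mn$ states, so pseudo-additivity (\ref{def_pseudo_aditividade}) becomes
\begin{align}
f(mn) = f(m) + f(n) + (1-q)\, f(m) f(n).
\end{align}
The substitution $g(W) \equiv 1 + (1-q) f(W)$ linearizes this into the multiplicative Cauchy equation $g(mn) = g(m) g(n)$. Combining continuity (\ref{cond_primeira_santos_teo}) with the monotonicity in $W$ given by (\ref{cond_dois_santos_teo}), the standard argument for the Cauchy equation on $\mathbb{N}$ extended by continuity yields $g(W) = W^{\alpha}$ for some constant $\alpha$, so
\begin{align}
f(W) = \frac{W^{\alpha} - 1}{1-q}.
\end{align}

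\textbf{Step 2 (fixing the exponent).} To determine $\alpha$, I would apply condition (\ref{cond_quatro_santos_teo}) to the uniform distribution on $W = L + M$ states split into blocks of sizes $L$ and $M$, so that $p_L = L/W$ and $p_M = M/W$. This produces an identity linking $f(W)$, $f(L)$, $f(M)$ and the two-event entropy $S_q(p_L, p_M)$. Imposing the form obtained in Step 1 and requiring consistency for all $L,M$ forces $\alpha = 1-q$, so
\begin{align}
f(W) = \frac{W^{1-q}-1}{1-q} = \ln_q W,
\end{align}
recovering equation \eqref{entropia_tsallis_equiprobabilistica}.

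\textbf{Step 3 (arbitrary distributions).} Given rational probabilities $p_i = n_i/N$ with $\sum_i n_i = N$, I would apply the decomposition rule (\ref{cond_quatro_santos_teo}) iteratively to the uniform distribution on $N$ states, grouped into $W$ blocks of sizes $n_i$. Each block is itself equiprobable with $n_i$ elements, so its internal entropy equals $f(n_i)$ from Step 1. Solving the resulting identity
\begin{align}
f(N) = S_q(\{p_i\})/k + \sum_{i=1}^{W} p_i^{q}\, f(n_i)
\end{align}
for $S_q(\{p_i\})$ and substituting the closed form of $f$ yields, after straightforward algebra,
\begin{align}
S_q(\{p_i\}) = k\,\frac{1 - \sum_{i=1}^{W} p_i^{q}}{q-1}.
\end{align}
Finally, continuity (\ref{cond_primeira_santos_teo}) extends the identity from rational to real $p_i$.

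The step I expect to be the main obstacle is Step 3: verifying that the $p_L^{q}$ and $p_M^{q}$ weights in condition (\ref{cond_quatro_santos_teo}) fit together consistently under the nested decomposition and actually collapse into the single sum $\sum_i p_i^{q}$ rather than producing cross terms. Exactly those nonlinear weights are what distinguish Santos's axiomatization from Shannon's and what ultimately generate the characteristic $\sum_i p_i^q$ of the Tsallis functional, so careful bookkeeping here is essential.
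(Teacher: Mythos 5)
Your proposal is correct and follows essentially the same route as the paper's proof: the equiprobable case is settled first via pseudo-additivity plus monotonicity (the paper does this by induction on $S_q(s^m)$ followed by the $s^m\leq t^n\leq s^{m+1}$ sandwich, which is exactly the ``standard argument'' you invoke for the linearized equation $g(mn)=g(m)g(n)$), and then condition (\ref{cond_quatro_santos_teo}) with $p_i=n_i/\sum_j n_j$ transfers the result to arbitrary rational distributions, with the exponent forced to $1-q$ by the requirement that the answer depend only on the $p_i$ — the paper fixes this exponent at the end from the general $W$-block identity rather than in your earlier two-block Step 2, but that is a cosmetic reordering. Your observation that condition (\ref{cond_quatro_santos_teo}) must be used even though the theorem statement cites only the first three is also consistent with the paper's own proof, which relies on it explicitly.
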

	\begin{proof}
		Considere um experimento, cuja probabilidade de que um resultado seja obtido é equiprovável e dada por $1/s^{m}$. Podemos subdividir esse experimento em $m$ sub experimentos, com a probabilidade equiprovável $1/s$ de que um determinado resultado seja obtido (ver Figura \ref{fig:umaescolha19}). Desse modo, pode-se obtê-lo utilizando a terceira condição de Santos. Provaremos o primeiro resultado usando o princípio da indução finita (PIF).  Comecemos pelo caso base.
		\begin{figure}[h!]
			\centering
			\includegraphics[width=0.50\linewidth]{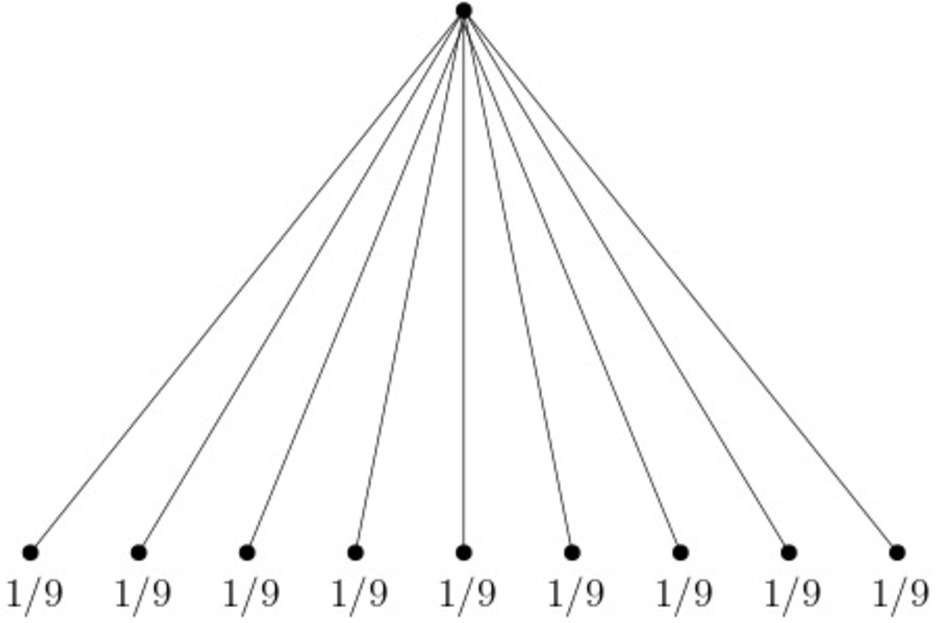}\\
			\textbf{(a)}\\
			\includegraphics[width=0.85\linewidth]{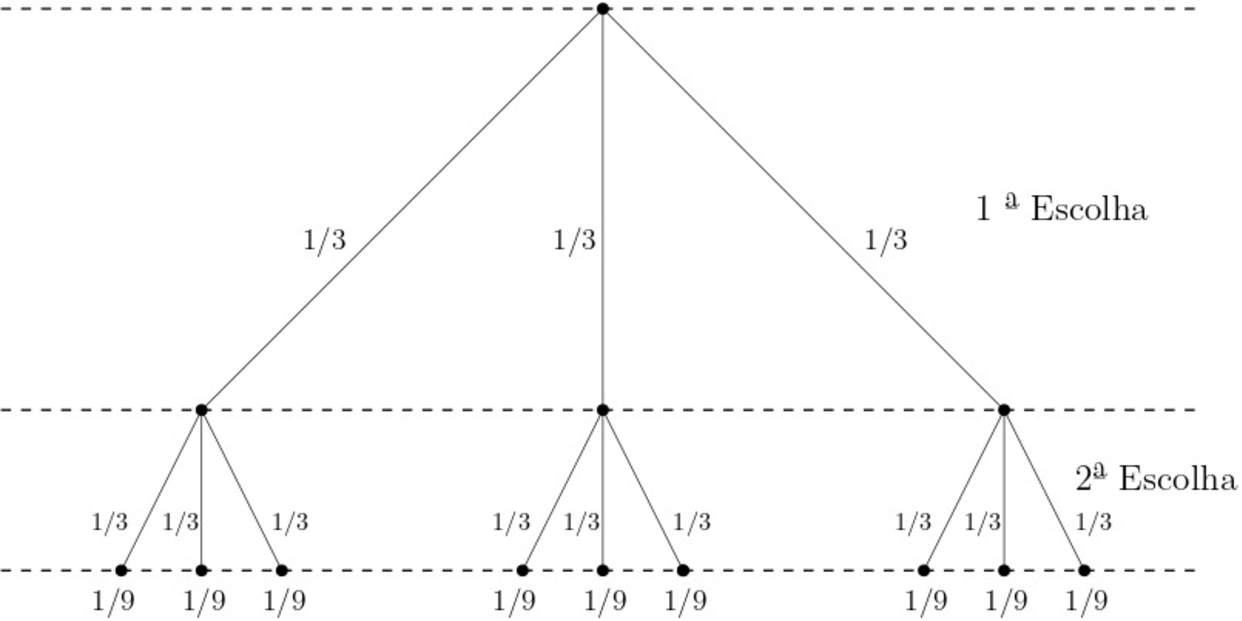}\\
			\textbf{(b)}
\caption{\textbf{(a)} Uma escolha com $9$ possibilidades equiprováveis. Portanto, a probabilidade de que uma das possibilidades seja escolhida é de $1/9$. Nesse caso em particular, $s=3$ e $m=2$. \textbf{(b)} Duas escolhas (m=2), com 3 (s=3) possibilidades equiprováveis cada. Portanto, na primeira escolha, com 3 possibilidades, tem-se, então, que a probabilidade é $1/3$ de que uma das possibilidades seja escolhida. Na segunda, há 3 possibilidades de escolhas, a probabilidade de que uma das possibilidades seja escolhida também é $1/3$. No fim, após as duas escolhas serem realizadas, a probabilidade final de que um dos resultados seja atingido é $1/9$.}
\label{fig:umaescolha19}
		\end{figure}
		\begin{itemize}
			\item Se $m=2$,
			\begin{align}
				S_{q}(s^{2}) &= 2S_{q}(s) + (1-q)S_{q}^{2}(s) \nonumber \\&= \frac{[2(1-q)S_{q}(s) + (1-q)^{2}S_{q}^{2}(s) + 1] - 1}{(1-q)} \nonumber \\ &= \frac{[(1-q)S_{q}(s) + 1]^{2}-1}{(1-q)}.
			\end{align}
			\item  Se $m=3$,
			\begin{align}
				&S_{q}(s^{3}) = S_{q}(s\cdot s^{2}) \nonumber\\&= S_{q}(s) + S_{q}(s^{2}) + (1-q)S_{q}(s)S_{q}(s^{2}) \nonumber\\ 
				&=3S_{q}(s) + (1-q)S_{q}^{2}(s) + 2(1-q)S_{q}^{2}(s) + (1-q)^{2}S_{q}^{3}(s) \nonumber\\
				&= \frac{1 + 3(1-q)S_{q}(s) + 3(1-q)^{2}S_{q}^{2}(s) + (1-q)^{3}S_{q}^{3}(s) - 1}{1-q} \nonumber\\
				&= \frac{[1 + (1-q)S_{q}(s)]^{3} - 1}{1-q}.
			\end{align}
			\item Por hipótese de indução, assuma que, para algum $n$ inteiro,
			\begin{align}\label{suposição_demonstracao}
				S_{q}(s^{n}) = \frac{[1 + (1-q)S_{q}(s)]^{n} - 1}{1-q}.
			\end{align}
			tem-se que para $m=n+1$
			\begin{align*}
				&S(s^{n+1}) = S_q(s^{n}s) \nonumber\\&= S_q(s^n) + S_q(s) +(1-q)S_q(s^n)S_q(s)\nonumber\\
				&= \frac{[1+(1-q)S_q(s)]^n-1}{1-q} + S_q(s) \nonumber\\&+(1-q)\frac{[1+(1-q)S_q(s)]^n-1}{1-q}S(s)\nonumber\\
				&= \frac{[1+(1-q)S_q(s)]^n-1}{1-q} + S_q(s) \nonumber\\&+[1+(1-q)S_q(s)]^nS_q(s)-S_q(s)\nonumber\\
				&= \frac{[1+(1-q)S_q(s)]^n-1}{1-q}  +[1+(1-q)S_q(s)]^nS_q(s)\nonumber\\
				&= \frac{[1+(1-q)S_q(s)]^n-1}{1-q}  \nonumber\\&+\frac{(1-q)}{1-q}[1+(1-q)S_q(s)]^nS_q(s)\nonumber\\
				&= \frac{[1+(1-q)S_q(s)]^n}{1-q}\left[1+ (1-q)S_q(s)\right] -\frac{1}{1-q}\nonumber\\
				&= \frac{[1+(1-q)S_q(s)]^{n+1}}{1-q}-\frac{1}{1-q}\nonumber\\
				&= \frac{[1+(1-q)S_q(s)]^{n+1}-1}{1-q}.\nonumber\\
			\end{align*}
		\end{itemize}
		\vspace{-1.1cm} Logo, $S_{q}(s^{m}) = \frac{[1 + (1-q)S_{q}(s)]^{m} - 1}{1-q}$. No limite $q\to 1$, a condição de Boltzmann-Gibbs é recuperada
		\begin{align}
			S_{1}(s^{m}) = \lim_{q\rightarrow 1}\frac{[1 + (1-q)S_{q}(s)]^{m} - 1}{1-q}=mS_{1}(s).
		\end{align}
		Para um par $(m,s)$ suficientemente grande, é possível encontrar um par de números inteiros $(t,n)$, tal que
		\begin{align}\label{eq_dos_pares_sm_tn}
			s^{m} \leq t^{n} \leq s^{m+1}.
		\end{align} 
		Usando a condição \eqref{cond_dois_santos_teo}, obtém-se, para qualquer valor de $q$
		\begin{align}\label{eq_sq_m}
			S_{q}(s^{m}) \leq S_{q}(t^{n}) \leq S_{q}(s^{m+1}).
		\end{align}
		Utilizando a equação \eqref{suposição_demonstracao} para $q<1$, tem-se
		\begin{align}
			\left[1+ (1-q)\frac{S_{q}(s)}{k}\right]^{m} &\leq \left[1 + (1-q)\frac{S_{q}(t)}{k}\right]^{n}\nonumber\\ &\leq \left[1 + (1-q)\frac{S_{q}(s)}{k}\right]^{m+1},
		\end{align}
		tomando o logaritmo da desigualdade acima temos
		\begin{align}
			m\ln\left[1+(1-q)\frac{S_{q}(s)}{k}\right] &\leq n\ln\left[1+(1-q)\frac{S_{q}(t)}{k}\right]\nonumber \\ &\leq (m+1)\ln\left[1+(1-q)\frac{S_{q}(s)}{k}\right],
		\end{align}
		dividindo por $n\ln\left[1+(1-q)\frac{S_{q}(t)}{k}\right]$
		\begin{align}
			\frac{m}{n} \leq \frac{\ln \left[1 + (1-q)\frac{S_{q}(t)}{k}\right]}{\ln \left[1 + (1-q)\frac{S_{q}(s)}{k}\right]} \leq \frac{m}{n} + \frac{1}{n}.
		\end{align}
		Somando $-m/n$ em ambos os lados da desigualdade, resultado em
		\begin{align}
			0\leq \frac{\ln \left[1 + (1-q)\frac{S_{q}(t)}{k}\right]}{\ln \left[1 + (1-q)\frac{S_{q}(s)}{k}\right]} - \frac{m}{n}\leq \frac{1}{n},
		\end{align}
		ou equivalentemente 
		\begin{align}\label{ln_eq_st_total}
			\left| 	\frac{m}{n} - \frac{\ln \left[1 + (1-q)\frac{S_{q}(t)}{k}\right]}{\ln \left[1 + (1-q)\frac{S_{q}(s)}{k}\right]} \right| \leq \frac{1}{n} \equiv \varepsilon.
		\end{align}
		Da equação \eqref{eq_dos_pares_sm_tn}
		\begin{align}
			\frac{m}{n} \leq \frac{\ln t}{\ln s} \leq \frac{m}{n} + \frac{1}{n},
		\end{align}
		e, de modo análogo
		\begin{align}\label{qe_lnts}
			\left|\frac{m}{n} - \frac{\ln t}{\ln s}\right| \leq \frac{1}{n} \equiv \varepsilon.
		\end{align}
		Usando as equações \eqref{ln_eq_st_total} e \eqref{qe_lnts} combinadas
		\begin{align}
			&\left|\frac{\ln t}{\ln s} - \cancel{\frac{m}{n}} + \cancel{\frac{m}{n}}+ \frac{\ln \left[1 + (1-q)\frac{S_{q}(t)}{k}\right]}{\ln \left[1 + (1-q)\frac{S_{q}(s)}{k}\right]}\right|\nonumber \\ &\leq \left|\frac{m}{n} - \frac{\ln t}{\ln s}\right| + \left|\frac{m}{n}-\frac{\ln \left[1 + (1-q)\frac{S_{q}(t)}{k}\right]}{\ln \left[1 + (1-q)\frac{S_{q}(s)}{k}\right]}\right| \leq 2\varepsilon,
		\end{align}
		concluí-se
		\begin{align}
			\left|\frac{\ln t}{\ln s} - \frac{\ln\left[1 + (1-q)\frac{S_{q}(t)}{k}\right]}{\ln\left[1+(1-q)\frac{S_{q}(s)}{k}\right]}\right| \leq 2\varepsilon. 
		\end{align}
		No limite $\varepsilon\rightarrow 0$, obtém-se
		\begin{align}
			\frac{\ln\left[1+(1-q)\frac{S_{q}(s)}{k}\right]}{\ln s} = \frac{\ln\left[1+(1-q)\frac{S_{q}(t)}{k}\right]}{\ln t} = p(q).
		\end{align}
		Logo, a função $S_{q}(t)$ toma a forma de
		\begin{align}\label{passo_intermediario_demonstracao}
			S_{q}(t) = k\frac{t^{p(q)} -1}{1-q}.
		\end{align}
		Considere agora uma escolha de $W$ partições, cada uma com probabilidade 
		\begin{align}\label{ppdd_particoes}
			p_{i}  = \frac{n_{i}}{\sum_{j=1}^{W} n_{j}},
		\end{align}
		onde $n_{i}$ é o número de possibilidades na $i-$ésima partição, cada uma com probabilidade igual. Por exemplo, se uma caixa com dez bolinhas diferentes, como ilustrado na Figura (\ref{fig:imagemcaixaspossibilidades}) for dividida em $W=4$ partições, representadas pelas cores amarela, verde, vermelha e azul. A probabilidade de que uma das cores saia é dada pela equação \eqref{ppdd_particoes}. Onde $n_{1}=4$, $n_{2}=3$, $n_{3}=2$ e $n_{4}=1$. Nesse esquema, a probabilidade de que eu escolha a partição amarela é de $2/5$ e que nessa partição eu tire a bolinha azul é $1/4$. Logo a probabilidade que, em uma escolha, eu tire a bolinha azul é dada pelo produto das duas probabilidades, que é igual a $1/10$.
		\begin{figure}[h!]
			\centering
			\includegraphics[width=0.4\linewidth]{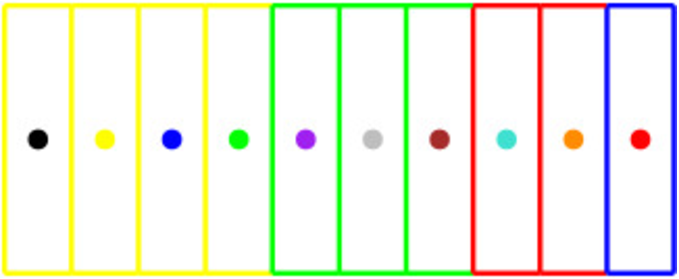}
			\caption{Uma caixa com $W=4$ partições. Cada cor está associada a uma partição diferente. A partição amarela tem $n_{1}=4$ possibilidades em uma escolha, a verde $n_{2}=3$, a vermelha $n_{3}=2$ e a azul $n_{4}=1$ possibilidades.}
			\label{fig:imagemcaixaspossibilidades}
		\end{figure}
		
		Usando a condição \eqref{cond_quatro_santos_teo} de Santos, tem-se
		\begin{align}
			S_{q}\left(\left\{\frac{1}{\sum_{i}^{W}n_{i}}\right\} \right) = S_{q}(p_{i}) + \sum_{i=1}^{W} p_{i}^{q} S_{q}\left(\left\{\frac{1}{n_{i}}\right\}\right)
		\end{align}
		por meio da equação \eqref{passo_intermediario_demonstracao}, pode-se escrever
		\begin{align}
			k\frac{\left(\sum_{i=1}^{W}n_{i}\right)^{p}-1}{1-q} = S(p_{i}) + k\sum_{i=1}^{W} p_{i}^{q}\left(\frac{n_{i}^{p}-1}{1-q}\right).
		\end{align}
		À luz do exemplo anterior, podemos interpretar que a entropia de tirar uma bolinha de determinada cor é dada pela soma da entropia de tirar uma determinada partição em uma escolha mais a soma ponderada das entropias de escolher uma determinada cor em uma partição. Então
		\begin{align}
			S_{q}(p_{i}) = \frac{k}{1-q}\left\{\left(\sum_{i=1}^{W}n_{i}\right)^{p} - 1 + \sum_{i=1}^{W}p_{i}^{q} - \sum_{i=1}^{W}p_{i}^{q}n_{i}^{p}\right\},
		\end{align}
		mas da equação \eqref{ppdd_particoes}, temos
		\begin{align}
			n_{i}^{p} = p_{i}^{p}\left(\sum_{i=1}^{W}n_{i}\right)^{p},
		\end{align}
		logo
		\begin{align}
			S_{q}(p_{i}) =& \frac{k}{1-q}\left\{\left(\sum_{i=1}^{W}n_{i}\right)^{p} - 1 + \sum_{i=1}^{W}p_{i}^{q}\right.\nonumber\\&\left. - \sum_{j=1}^{W}p_{j}^{q}p_{j}^{p}\left(\sum_{i=1}^{W}n_{i}\right)^{p}\right\}.
		\end{align}
		Para satisfazer a condição \eqref{cond_primeira_santos_teo}, deve-se observar que
		\begin{align}
			p = 1-q.
		\end{align}
		Portanto
		\begin{align}
			S_{q}(p_{i}) = k\frac{1-\sum_{i=1}^{W}p_{i}^{q}}{q-1}, 
		\end{align}
		e para $t$ escolhas equiprováveis
		\begin{align}
			S_{q}(t) = k\frac{t^{1-q}-1}{q-1}.
		\end{align}
	\end{proof}
	
	Portanto, assim como teorema de Shannon garante a unicidade da entropia de Boltzmann-Gibbs sob condições razoáveis, o teorema de Santos garante a unicidade da entropia de Tsallis sob condições análogas.

	\subsection{SOBRE ADITIVIDADE E EXTENSIVIDADE}

É usual referir-se à mecânica estatística não aditiva de Tsallis como mecânica estatística não extensiva \cite{tsallis2004nonextensive,nunes2002nonextensive,boon2005special}. Isso porque a entropia resultante de dois subsistemas não corresponde à soma das entropias individuais para $q\neq 1$. A mecânica estatística de Boltzmann-Gibbs, em contrapartida, cuja entropia resultante da composição de dois subsistemas é a soma das entropias individuais, é considerada extensiva. Nesse contexto, o conceito de extensividade parece estar associado ao fato de a entropia ser ou não aditiva \cite{landsberg1999entropies}, o que pode nos conduzir a uma ideia equivocada desse conceito.

A extensividade é um conceito termodinâmico e fundamental para estabelecer a estrutura das transformações de Legendre em termodinâmica \cite{tsallis2014introduction}. Sua definição, contudo, não depende de a entropia ser composta de forma aditiva ou não. A aditividade, por sua vez, depende apenas de como a entropia é definida em termos probabilísticos. De forma precisa, pode-se definir\cite{book:166140}

\begin{definition}[Extensividade]
	Seja $N$ o número de elementos que compõem um sistema ou subsistema, sejam eles probabilisticamente independentes ou não. A entropia desse sistema ou subsistema é dita extensiva se for proporcional a $N$.
\end{definition}

Portanto, uma entropia pode ser aditiva e não extensiva para um determinado sistema termodinâmico, como é o caso da entropia de Boltzmann-Gibbs para sistemas fortemente correlacionados. Por outro lado, uma entropia não aditiva pode ser extensiva para um dado sistema, a depender de um valor específico de $q$.

	\subsection{ENSEMBLE MICROCANÔNICO E ENSEMBLE CANÔNICO}
	
	\textcolor{black}{Assim como os ensembles microcanônico e canônico podem ser analisados a partir da entropia de Boltzmann-Gibbs, eles também podem ser descritos por meio da entropia de Tsallis, quando se trata de sistemas compatíveis com essa generalização estatística, conforme discutido em \cite{plastino1999tsallis}. A seguir, apresentamos uma derivação detalhada da entropia no ensemble microcanônico e da função de partição no ensemble canônico nesse contexto.}

	\begin{proposition}[Ensemble microcanônico]
		Em um sistema físico isolado a entropia de Tsallis \eqref{Tsallis_entropia_estados_discretos}, sob a condição de normalização da probabilidade \eqref{condicao_de_normalizacao}, assume seu valor extremizado dado pela equação \eqref{entropia_tsallis_equiprobabilistica} e a probabilidade é dada por $p_{i}=1/W$, onde $W$ é o número de estados acessíveis ao sistema.
	\end{proposition}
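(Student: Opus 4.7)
A estratégia natural é seguir, passo a passo, a mesma demonstração feita para o ensemble microcanônico no caso de Boltzmann-Gibbs, trocando o funcional entrópico pela entropia de Tsallis. O primeiro passo é construir o funcional de Lagrange incorporando apenas o vínculo de normalização, já que num sistema isolado não há restrição sobre valor médio de energia:
\begin{align*}
\mathcal{L} = S_{q} - \lambda\left(1 - \sum_{i=1}^{W} p_{i}\right)
= k\,\frac{1 - \sum_{i=1}^{W} p_{i}^{q}}{q-1} - \lambda\left(1 - \sum_{i=1}^{W} p_{i}\right).
\end{align*}

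Em seguida, eu calcularia a variação $\delta\mathcal{L} = 0$. A única derivada não trivial é $\partial S_{q}/\partial p_{i} = -kq\,p_{i}^{q-1}/(q-1)$, de modo que a condição de estacionariedade, válida termo a termo em $i$, fornece $-kq\,p_{i}^{q-1}/(q-1) + \lambda = 0$. Isso implica imediatamente que $p_{i}^{q-1}$ é constante em $i$ (para qualquer $q \neq 1$), ou seja, todos os microestados devem ser equiprováveis. A condição de normalização \eqref{condicao_de_normalizacao} fecha o problema: $W p_{i} = 1$, logo $p_{i} = 1/W$.

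Resta substituir essa distribuição na definição \eqref{Tsallis_entropia_estados_discretos} e reconhecer o $q$-logaritmo, recuperando $S_{q} = k\,(W^{1-q}-1)/(1-q) = k\ln_{q}W$, isto é, a equação \eqref{entropia_tsallis_equiprobabilistica}. Por fim, para caracterizar o tipo de ponto crítico, eu invocaria a propriedade de convexidade/concavidade já demonstrada: como $S_{q}$ é côncava para $q>0$ e convexa para $q<0$, o extremo é um máximo no primeiro caso e um mínimo no segundo, o que justifica o uso do termo mais genérico \emph{extremizado} no enunciado. Alternativamente, poderia-se calcular $\partial^{2} S_{q}/\partial W^{2}$ a partir de $k\ln_{q}W$ e inspecionar seu sinal, obtendo a mesma conclusão.

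O principal cuidado técnico é, justamente, não afirmar cegamente que o ponto crítico é um máximo, como ocorre no caso de Boltzmann-Gibbs: a dicotomia entre $q<0$ e $q>0$ deve ser explicitada, pois é aí que o argumento deixa de ser mera cópia do caso aditivo. Fora isso, todas as manipulações são rotineiras e ecoam diretamente a demonstração da proposição correspondente para $S_{BG}$.
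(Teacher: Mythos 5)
Sua proposta está correta e segue essencialmente o mesmo caminho da demonstração do artigo: extremização por multiplicadores de Lagrange com o vínculo de normalização, conclusão de que $p_{i}^{q-1}$ é constante e, portanto, $p_{i}=1/W$, seguida da substituição na definição para obter $k\ln_{q}W$. A sua observação final sobre a dicotomia máximo/mínimo conforme o sinal de $q$ coincide com o comentário que o artigo faz logo após a prova, via o sinal de $\partial^{2}S_{q}/\partial p_{i}^{2}$ em $p_{i}=1/W$.
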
 
	
	\begin{proof}
		Utilizaremos o método de multiplicadores de Lagrange para extremizar a entropia de Tsallis sob a condição de normalização. Para isso, definimos a função
		\begin{align}
			\mathcal{L} =  k\frac{1-\sum_{i=1}^{W}p_{i}^{q}}{q-1}+ \lambda\left(1 - \sum_{i=1}^{W} p_{i}\right),
		\end{align}
		onde $\lambda$ é o multiplicador de Lagrange.
		Variando a função acima, temos que
		\begin{align}
			\delta\mathcal{L} = \left(-k\frac{q}{q-1}p_{i}^{q-1} -\lambda\right)\delta p_{i} =0.
		\end{align}
		
		Desse modo, podemos escrever a probabilidade como
		\begin{align}\label{ppdd_multiplicador_lagrange_tsallis}
			p_{i} = \left[-\frac{\lambda (q-1)}{kq}\right]^{1/(q-1)}.
		\end{align}
		
		Substituindo a probabilidade na condição de normalização dada pela equação \eqref{condicao_de_normalizacao}, temos
		\begin{align}
			\sum_{i=1}^{W}  \left[-\frac{\lambda (q-1)}{kq}\right]^{1/(q-1)} = W \left[-\frac{\lambda (q-1)}{kq}\right]^{1/(q-1)}=1.
		\end{align}
		Portanto, o multiplicador de Lagrange é dado por 
		\begin{align}
			\lambda = -\frac{kq}{(q-1)W^{q-1}},
		\end{align}
		e a probabilidade encontrada na equação \eqref{ppdd_multiplicador_lagrange_tsallis} se reduz à
		\begin{align}
			p_{i} = \frac{1}{W}.
		\end{align}
		Dessa maneira, podemos escrever a entropia, no caso em que ela é extremizada como
		\begin{align}
			S_{q}(W) = k\frac{W^{q-1}-1}{1-q}.
		\end{align}
	\end{proof}
	
	Note que, diferentemente de Boltzmann-Gibbs, a entropia de Tsallis quando extremizada sob a condição de normalização, assume valor máximo ou mínimo a depender do valor de $q$. Basta olharmos que a a segunda derivada é dada por
	\begin{align}
		\left.\frac{\partial^{2} S_{q}(W)}{\partial {p_{i}}^{2}}\right|_{p_{i}=1/W} = -k q W^{2-q}.
	\end{align} 
	Para $q>0$ é máxima e para $q<0$ é mínima.

	Definindo a energia interna de um sistema que satisfaz a mecânica estatística não aditiva como \cite{EMFCurado1991}
	\begin{align}\label{energia_media_ensemble_canonico_q}
		\langle E_{q}\rangle = \sum_{i=1}^{W} p^{q}_{i}\varepsilon_{i},
	\end{align}
	podemos introduzir a ensemble canônico e a função de partição para mecânica estatística baseada na entropia de Tsallis.
	
	\begin{proposition}[Ensemble canônico]
		Um sistema físico em contato com um reservatório térmico, que satisfaz entropia de Tsallis \eqref{Tsallis_entropia_estados_discretos}, sob a condição de normalização \eqref{condicao_de_normalizacao}, com energia média dada pela equação \eqref{energia_media_ensemble_canonico_q} e à luz do princípio da máxima entropia, é caracterizado pela função de partição dada por
		\begin{align}
			Z_{q} = \sum_{i=1}^{W} \left[1 - (1-q)\frac{\beta}{k}\varepsilon_{i}\right]^{1/(1-q)}.
		\end{align}
		
		\begin{proof}
			Para encontrar a função de partição, devemos extremizar a função entropia pelo método de multiplicadores de Lagrange utilizando como restrições a condição de normalização \eqref{condicao_de_normalizacao} e a energia interna média \eqref{energia_media_ensemble_canonico_q}. Para isso, definimos a seguinte função

			\begin{align}
				\mathcal{L} =  k\frac{1-\sum_{i=1}^{W}p_{i}^{q}}{q-1} + \lambda\sum_{i=1}^{W} p_{i} - \beta \sum_{i=1}^{W} p^{q}_{i}\varepsilon_{i}.
			\end{align}
			Variando a função acima, obtemos
			\begin{align}
				\delta\mathcal{L} = \sum_{i=1}^{W}\left(-\frac{kq}{(q-1)}p_{i}^{(q-1)} +\lambda - \beta q \varepsilon_{i} p_{i}^{(q-1)} \right)\delta p_{i}=0.
			\end{align}
			Portanto, $\forall i$ temos que
			\begin{align}
				p_{i}^{(q-1)}\left(\frac{kq}{(q-1)} + q\beta\varepsilon_{i}\right) = \lambda,
			\end{align}
			\begin{align}\label{ppdd_ensemble_canonico}
				p_{i} = \left[\frac{(q-1)\lambda}{kq}\right]^{1/(q-1)}\frac{1}{\left[1+ (q-1)\frac{\beta}{k}\varepsilon_{i}\right]^{1/(q-1)}}.
			\end{align}
			Usando a condição de normalização,
			\begin{align}
				\sum_{i=1}^{W} p_{i} =  \sum_{i=1}^{W} \left[\frac{(q-1)\lambda}{q}\right]^{1/(q-1)}\frac{1}{\left[1+ (q-1)\frac{\beta}{k}\varepsilon_{i}\right]^{1/(q-1)}} =1 ,
			\end{align}
			\begin{align}
				\lambda^{1/(q-1)} = \frac{1}{\left[\frac{(q-1)}{q}\right]^{1/(q-1)}\sum_{i=1}^{W}[1 + (q-1)\frac{\beta}{k}\varepsilon_{i}]^{1/(1-q)}}.
			\end{align}
			Substituindo o valor de $\lambda$ na equação \eqref{ppdd_ensemble_canonico}, temos
			\begin{align}\label{ppdd_canonico_q}
				p_{i} = \frac{[1 - (1-q)\frac{\beta}{k}\varepsilon_{i}]^{1/(1-q)}}{Z_{q}}=\frac{e_{q}^{\frac{\beta\varepsilon_{i}}{k}}}{Z_{q}},
			\end{align}
			onde 
			\begin{align}
				Z_{q} = \sum_{i=1}^{W}\left[1 - (1-q)\frac{\beta}{k}\varepsilon_{i}\right]^{1/(1-q)} = \sum_{i=1}^{W} e_{q}^{\frac{\beta\varepsilon_{i}}{k}},
			\end{align}
			é a função de partição.
		\end{proof}
		
		Dessa maneira, conhecida a função de partição do sistema descrito pela entropia de Tsallis, pode-se então encontrar sua entropia, dado que a partir da função de partição pode-se obter as probabilidades.
	\end{proposition}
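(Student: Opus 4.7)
The plan is to extend the variational argument used for the Boltzmann--Gibbs canonical ensemble to $S_q$, now with \emph{two} constraints: the normalization \eqref{condicao_de_normalizacao} and the generalized internal energy \eqref{energia_media_ensemble_canonico_q}. I would introduce Lagrange multipliers $\lambda$ (for normalization) and $\beta$ (conjugate to $\langle E_q\rangle$) and form
\begin{align*}
\mathcal{L} = k\,\frac{1-\sum_{i=1}^{W} p_i^{q}}{q-1} + \lambda\left(1-\sum_{i=1}^{W} p_i\right) + \beta\left(\langle E_q\rangle - \sum_{i=1}^{W} p_i^{q}\,\varepsilon_i\right),
\end{align*}
and seek the distributions that extremize $\mathcal{L}$.

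The core computation is the stationarity condition $\partial\mathcal{L}/\partial p_i = 0$. Since $\partial p_i^{q}/\partial p_i = q\,p_i^{q-1}$, this yields, for every $i$,
\begin{align*}
-\frac{k q}{q-1}\,p_i^{q-1} - \lambda - \beta\, q\, \varepsilon_i\, p_i^{q-1} = 0.
\end{align*}
I would then factor $p_i^{q-1}$ on the left, solve for it to obtain an expression of the form $p_i^{q-1} = c\,[1+(q-1)\beta\varepsilon_i/k]^{-1}$ with $c$ depending only on $\lambda,k,q$, and raise both sides to the power $1/(q-1)$. Rewriting $(q-1)=-(1-q)$ inside the bracket, the result naturally collapses into the $q$-exponential already introduced in the text, $p_i \propto [1-(1-q)\beta\varepsilon_i/k]^{1/(1-q)} = e_q^{-\beta\varepsilon_i/k}$.

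To finish, I would enforce $\sum_i p_i = 1$. This fixes the proportionality constant and defines
\begin{align*}
Z_q = \sum_{i=1}^{W}\left[1-(1-q)\frac{\beta}{k}\varepsilon_i\right]^{1/(1-q)},
\end{align*}
so that $p_i = e_q^{-\beta\varepsilon_i/k}/Z_q$, which is exactly the announced partition function.

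The main obstacle I anticipate is purely algebraic bookkeeping: inverting the exponent $q-1$, carefully swapping $(q-1)$ for $-(1-q)$ so the bracket matches the definition of $e_q$, and tracking signs when $\beta$ is absorbed into $Z_q$. A conceptual subtlety worth flagging, in analogy with the remark after the microcanonical proposition, is that the stationary point is a maximum for $q>0$ and a minimum for $q<0$; strictly speaking, the maximum-entropy interpretation applies only in the former regime, and verifying the sign of the second variation $\partial^{2}\mathcal{L}/\partial p_i^{2}$ at the extremum is what ultimately justifies invoking the \emph{maximum} entropy principle in the statement.
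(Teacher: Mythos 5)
Your proposal follows exactly the paper's route: a Lagrange functional with multipliers for normalization and for the $q$-generalized energy $\langle E_q\rangle=\sum_i p_i^q\varepsilon_i$, stationarity giving $p_i^{q-1}$ proportional to $[1+(q-1)\beta\varepsilon_i/k]^{-1}$, and normalization defining $Z_q$. The only differences are immaterial sign conventions in the multipliers, plus your (welcome, and correct) remark on the second variation; your writing of the result as $e_q^{-\beta\varepsilon_i/k}$ is in fact the sign-consistent form of what the paper denotes $e_q^{\beta\varepsilon_i/k}$.
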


\subsection{APLICAÇÕES}\label{aplicacoes}

Os formalismos de mecânica estatística discutidos neste trabalho constituem um arcabouço teórico de amplo alcance, com aplicações que se estendem da física à linguística, passando pela computação, biologia, economia e diversas outras áreas do conhecimento.

A abordagem convencional, baseada na mecânica estatística de Boltzmann–Gibbs e usualmente abordada em cursos de graduação, oferece um repertório clássico de aplicações em física. Incluindo o gás ideal clássico e o gás de Fermi, o condensado de Bose–Einstein e o sólido de Einstein, ambos facilmente encontradas em referências canônicas \cite{book:18204,book:1322644,reif2009fundamentals}. Para além das aplicações tradicionais à física, esse formalismo tem sido amplamente explorado em pesquisas interdisciplinares, por exemplo:

\begin{itemize}
\item em linguística, para modelar a aquisição e a evolução das línguas ao longo do tempo \cite{cassandro1999statistical,kosmidis2006statistical};
\item em biologia, por meio do princípio da máxima entropia, para investigar a diversidade de espécies em ecossistemas \cite{banavar2010applications};
\item em economia, para analisar dinâmicas de mercado financeiro, dando origem à econofísica \cite{yakovenko2009econophysics,ausloos2000statistical,pereira2017econophysics}.
\end{itemize}

Essas aplicações reforçam a posição da mecânica estatística de Boltzmann–Gibbs como um dos importantes campos de aplicações modernas da física.

De forma análoga, a mecânica estatística de Tsallis tem ocupado papel de destaque na física contemporânea, impulsionando avanços como o cálculo do calor específico do átomo de hidrogênio para $q<1$ \cite{PhysRevE.51.6247} e a análise de mapas logísticos \cite{tsallis_2021,tirnakli2016standard}. Outras aplicações relevantes incluem:

\begin{itemize}
\item a investigação sobre buracos negros a partir da entropia de Tsallis permitiu explorar a estabilidade termodinâmica de diferentes soluções sob uma outra perspectiva. Observou-se que buracos negros de Schwarzschild, por exemplo, embora sejam termodinamicamente instáveis no contexto da entropia de Bekenstein-Hawking, tornam-se estáveis quando analisados sob a estatística de Tsallis, desde que o parâmetro $q$ satisfaça $q < 1$.\cite{abreu2021nature,chunaksorn2025black,mejrhit2019thermodynamics};

\item a análise das curvas de luz em raios X de sistemas astrofísicos revelou que elas seguem uma distribuição $q$-exponencial, com um valor médio do parâmetro $q$ estimado em $q = 1{,}418 \pm 0{,}007$ \cite{rosa2013nonextensivity}. Nesse mesmo trabalho, demonstrou-se que quanto mais entrópico é o sistema, menor tende a ser o valor do parâmetro $q$. Já em \cite{carvalho2007radial}, foi mostrado que a distribuição da velocidade radial em aglomerados estelares também obedece à estatística $q$-exponencial. Além disso, observou-se uma correlação entre o parâmetro $q$ e a idade de aglomerados com mais de $10^9$ anos: quanto mais antigo o aglomerado e mais distante do centro galáctico, menos a distribuição de velocidades se assemelha à distribuição de Maxwell;

\item na área da geofísica, o formalismo estatístico de Tsallis foi utilizado para analisar a atividade sísmica na Grécia entre os anos de 1980 e 2010 \cite{michas2013non}. Constatou-se que a variação do parâmetro $q$, no intervalo entre $1,26$ e $1,54$, está correlacionada ao acúmulo de energia sísmica. Observou-se também que, à medida que o sistema se afastava do equilíbrio, tanto o valor de $q$ quanto a liberação de energia aumentavam, indicando a maior probabilidade de ocorrência de terremotos de maior magnitude \cite{sigalotti2023tsallis}. Outro resultado reportado mostra que a distância tridimensional entre terremotos sucessivos é bem descrita por uma distribuição $q$-exponencial com $q<1$ \cite{abe2003law};

	\item em biologia, a distribuição de velocidades de células \textit{Hydra} ajusta-se à distribuição q-exponencial com $q\approx 1,5$ \cite{upadhyaya2001anomalous}. A estatística de Tsallis também foi utilizada para analisar séries temporais de doenças neurodegenerativas. Os resultados indicam que o valor estimado do parâmetro $q$ pode ser utilizado como um dos possíveis marcadores biológicos, a fim de melhorar as previsões de crises epilépticas. \cite{iliopoulos2016tsallis}.
\end{itemize}

Para uma visão mais abrangente das aplicações físicas e interdisciplinares da entropia de Tsallis, o leitor pode consultar obras de referência \cite{book:166140,gell2004nonextensive,tsallis2011nonadditive,tsallis2009computational,tsallis1999nonextensive}. As aplicações destacadas, somadas às evidências experimentais que as sustentam \cite{boghosian1996thermodynamic}, têm consolidado a mecânica estatística de Tsallis como uma generalização consistente da abordagem de Boltzmann-Gibbs, permitindo a descrição de fenômenos que escapam ao escopo do formalismo tradicional.

	\subsection{ENTROPIA DE RÉNYI}
	Propostas alternativas de funcionais entrópicos, além dos supracitados, foram feitas. Dentre elas está a entropia de Rényi \cite{renyi1970probability} definida como
	\begin{align}\label{renyi_entropy_def}
		S_{q}^{R} \equiv k\frac{\ln \sum^{W}_{i=1}p_{i}^{q}}{1-q} .
	\end{align}
	Na literatura, diz-se com frequência que a entropia de Rényi é obtida via \textit{abordagem do logaritmo formal} da entropia de Tsallis $S_{q}$, como pode ser observado na equação \eqref{logaritmo_formal_tsallis} \cite{Czinner:2015eyk} e que resulta em
	\begin{align}
		S_{q}^{R} (p_{i}) = k \frac{\ln\left[1+(1-q)\frac{S_{q}}{k}\right]}{1-q}.
	\end{align}
	
	A entropia de Rényi tem a vantagem de ser aditiva, ou seja
	\begin{align}
		S_{q}^{R}(A+B) = S_{q}^{R}(A) + S_{q}^{R}(B),
	\end{align}
	esse fato permite que a lei zero da termodinâmica seja satisfeita, o que não acontece com a entropia de Tsallis \cite{ccimdiker2023equilibrium}. No entanto, ela é côncava apenas para $0<q\leq1$ \cite{boon2005special}, isso porque as funções $\ln(x)$ e $x^{q}$ são côncavas nesse intervalo \cite{JIZBA200417}. Viola para $q\neq 1$ a robustez experimental, a produção de entropia por unidade de tempo e a própria concavidade fora do intervalo já mencionado\footnote{Para entender o que é \textit{robustez experimental} e \textit{produção de entropia por unidade de tempo}, veja \cite{tsallis2003introduction} e \cite{book:166140}.}. Entretanto, no limite em que $q\rightarrow 1$, tem-se, utilizando a relação
	\begin{align}
		\ln(1+x) \approx x,
	\end{align}
	na equação \eqref{renyi_entropy_def},
	\begin{align}
		S_{q=1}^{R} = \frac{(1-q)S_{q=1}}{1-q} = S_{1} = S_{BG}.
	\end{align}
	Logo a entropia de Rényi, possui a entropia de Boltzmann-Gibbs \eqref{entropia_BG_estados_discretos} como caso particular, de modo análogo como a entropia de Tsallis.

	A entropia de Rényi, contudo, não desempenha um papel tão importante quanto as entropias de Boltzmann-Gibbs e de Tsallis na mecânica estatística. Isso se deve, sobretudo, à sua incompatibilidade com uma generalização consistente das leis termodinâmicas, o que limita sua aplicabilidade a sistemas compostos por muitas partículas ou que envolvam interações complexas \cite{boon2005special}.

\section{Considerações finais}
	
No presente trabalho foi realizada uma revisão detalhada de aspectos formais das diferentes formulações para a entropia, com ênfase nas entropias de Boltzmann-Gibbs e de Tsallis. A entropia de Tsallis tem propriedades mais gerais quando comparada à entropia de Boltzmann-Gibbs e, nesse sentido, configura-se, ao menos formalmente, como uma candidata consistente à sua extensão. No entanto, sob a perspectiva termodinâmica, a entropia de Tsallis apresenta limitações, notadamente a violação da lei zero da termodinâmica, em decorrência de sua não aditividade \cite{nauenberg2003critique}. Ainda assim, seu estudo abre caminhos relevantes para a investigação de sistemas fora do equilíbrio e com correlações de longo alcance.

A discussão desenvolvida neste artigo não apenas esclarece e demonstra propriedades matemáticas das funções entrópicas, como também introduz fundamentos da mecânica estatística não extensiva, proporcionando uma visão integrada das diferentes formulações entrópicas. Espera-se que esta análise, apresentada em linguagem acessível a estudantes de graduação, contribua para o aprofundamento da reflexão e da investigação sobre as diversas formulações entrópicas e os fundamentos da mecânica estatística não extensiva e sistemas complexos.

\section{Agradecimentos}

KSA agradece o apoio da Coordenação de Aperfeiçoamento de Pessoal de Nível Superior Brasil (CAPES) - Código de Financiamento 001 pelo apoio financeiro. RTC agradece ao Conselho Nacional de Desenvolvimento Científico e Tecnológico - CNPq (Processo 401567/2023-0), pelo suporte financeiro parcial.

\end{document}